\newtheorem{theorem}{Theorem}
\newtheorem{definition}[theorem]{Definition}
\newtheorem{proposition}[theorem]{Proposition}
\newtheorem{remark}[theorem]{Remark}
\definecolor{darkross}{rgb}{0.008,0.412,0.471}
\definecolor{middleross}{rgb}{0.012,0.580,0.663}
\definecolor{lightross}{rgb}{0.016,0.749,0.855}
\definecolor{darkblue}{rgb}{0.067,0.008,0.471}
\definecolor{middleblue}{rgb}{0.094,0.012,0.663}
\definecolor{lightblue}{rgb}{0.122,0.016,0.855}
\definecolor{darkpurple}{rgb}{0.471,0.008,0.412}
\definecolor{middlepurple}{rgb}{0.663,0.012,0.580}
\definecolor{lightpurple}{rgb}{0.855,0.016,0.749}
\definecolor{darkbrown}{rgb}{0.471,0.067,0.008}
\definecolor{middlebrown}{rgb}{0.663,0.094,0.012}
\definecolor{lightbrown}{rgb}{0.855,0.122,0.016}
\definecolor{darkolive}{rgb}{0.412,0.471,0.008}
\definecolor{middleolive}{rgb}{0.580,0.663,0.012}
\definecolor{lightolive}{rgb}{0.749,0.855,0.016}
\definecolor{darkgreen}{rgb}{0.008,0.417,0.067}
\definecolor{middlegreen}{rgb}{0.012,0.663,0.094}
\definecolor{lightgreen}{rgb}{0.016,0.855,0.122}
\definecolor{darkocre}{rgb}{0.471,0.298,0.008}
\definecolor{middleocre}{rgb}{0.663,0.420,0.012}
\definecolor{lightocre}{rgb}{0.855,0.541,0.016}
\def\qmo{``}
\def\qmc{''}
\def\qmcsp{'' }
\providecommand{\keywords}[1]{\textbf{Keywords:} #1}
\begin{document}

\title{Allocation of risk capital in a cost cooperative game induced by a modified Expected Shortfall}
\author{Mauro Bernardi\thanks{Department of Statistical Sciences,
University of Padua,
Via Cesare Battisti 241/243,
I-35121, Padua, Italy.}, \ Roy Cerqueti\thanks{Department of Economics and Law, University of Macerata, via Crescimbeni 20, I-62100, Macerata, Italy.}, \ Arsen Palestini\thanks{MEMOTEF, \ Sapienza University of Rome, Via del Castro Laurenziano 9, \ I-00161 Rome, \ Italy} }
\date{\today}

\maketitle

\begin{abstract}
%
\noindent The standard theory of coherent risk measures fails to
consider individual institutions as part of a system which might
itself experience instability and spread new sources of risk to the
market participants. In compliance with an approach adopted by
\cite{shapley_etal.1969}, this paper proposes a
cooperative market game where agents and institutions play the same
role can be developed. We take into account a multiple institutions
framework where some of them jointly experience distress events in
order to evaluate their individual and collective impact on the
remaining institutions in the market. To carry out this analysis, we
define a new risk measure ($SCoES$), generalising the Expected
Shortfall of \cite{acerbi.2002} and we characterise the riskiness
profile as the outcome of a cost cooperative game played by
institutions in distress (a similar approach was adopted by 
\citealt{denault.2001}). Each institution's marginal contribution to the
spread of riskiness towards the safe institutions in then evaluated
by calculating suitable solution concepts of the game such as the
Banzhaf--Coleman and the Shapley--Shubik values.
\end{abstract}

\par \bigskip

\keywords{Cooperative market game, Shapley Value, risk
measures, Value--at--Risk, expected shortfall, systemic risk.}

%
\section{Introduction}
\label{sec:intro}
%
The assessment of financial risk in a multi--institution framework
when some institutions are subject to systemic or non--systemic
distress is one of the main topics of the latest years which
received large attention from scholars in Mathematical Finance,
Statistics, Management, see, e.g., \cite{adrian_brunnermeier.2011,adrian_brunnermeier.2016}, \cite{billio_etal.2012}, \cite{acharya_etal.2012}, \cite{girardi_ergun.2013}, \cite{hautsch_etal.2014}, \cite{engle_etal.2014}, \cite{lucas_etal.2014}, \cite{bernardi_catania.2015}, \cite{sordo_etal.2015}, just to quote a few of the most relevant approaches. For an extensive and up to date survey on systemic risk measures, see \cite{bisias_etal.2012}, while the recent literature on systemic risk is reviewed by \cite{benoit_etal.2016}. Especially since 2008, Lehmann Brothers' collapse and the subsequent debt crisis contagion across Europe
raised crucial issues and questions to be addressed by new
macroeconomic and financial models. In particular, such events and
the related consequences provoked a wide increase in investigation
of instruments possibly suitable for risk evaluation, thereby
targeting the occurrence of contagion among institutions in distress
(see, among others, \citealt{drehmann_tarashev.2013} on the
theoretical side and \citealt{huang_etal.2012} and \citealt{huang_etal.2012b} on the
empirical side).\newline
\indent In this paper we are going to investigate circumstances where one or
more market institutions undergo financial distress and may spread
the contagion to the remaining safe institutions in the market.
Financial distress has a straightforward meaning: a condition of
dramatic financial instability of an institution, or lack of ability
to fully repay its creditors, or even the preliminary to bankruptcy
or liquidation. Although \qmo financial distress\qmcsp and \qmo default\qmcsp are
sometimes used interchangeably, there may be some differences, often
depending either on each country's specific bankruptcy law or on the
institution's rating and so on (see \citealt{altman_etal.2010} for an extensive discussion).\newline
\indent When a financial distress takes place in the market, each distressed
institution affects the safe ones and individually contributes to
the systemic risk. A crucial aspect of the contagion frameworks is
the assessment of each agent's (i.e. each institution's) marginal
effect on systemic risk. In order to evaluate such contributions, a
natural and intuitive tool is the Shapley value (see 
\citealt{shapley.1988}), as well as the Banzhaf value (see
\citealt{Banzhaf.1965} and \citealt{owen.1995}). It is interesting to note that,
although his world famous theories did not concern financial markets
specifically, L.S. Shapley, Nobel Laureate in Economic Sciences
in 2012, proposed a relevant application to market analysis in a paper with M. Shubik in 1969 (see \citealt{shapley_etal.1969}). In that paper, Shapley and
Shubik defined and investigated the \textit{direct markets} (see
\cite{shapley_etal.1969}, pp. 15--18), i.e., markets where, since each trader
possessed one initial unit of a personal commodity, traders might be
identified with commodities. As the Authors put it literally,
\qmo\textit{A canonical market form -- the direct market -- is introduced,
in which the commodities are in effect the traders themselves, made
infinitely divisible...}\qmcsp (p. 11), hence that was the first attempt
to model a cooperative game where agents had a twice nature: traders
and assets at the same time.\newline
\indent Hence, classical solution concepts widely used in Cooperative Game
Theory such as the Shapley value  and the Banzhaf value fit our
evaluation purposes. In more recent times, cooperative game theory
has already been employed for risk capital allocation by
\cite{denault.2001} and in the latest years by 
\cite{csoka_etal.2009} and \cite{abbasi_hosseinifard.2013}\footnote{A
mechanism for allocation of risk capital to subportfolios of pooled
liabilities has been proposed by Tsanakas and Barnett in 2003 (see
\citealt{tsanakas_barnett.2003}) and by Tsanakas in 2009 (see \citealt{tsanakas.2009}),
based on the Aumann--Shapley value.}. However, in none of the above
papers a characteristic value expressing the difference between an
expected value in a non--distressed case and an expected value in a
distressed--case has been taken into account.\newline
\indent The financial setup we are going to investigate involves a set of
possibly distressed institutions (for example a set of banks) and a
set of safe institutions. Our purpose is an assessment of the
contagion risk, i.e., the extent to which the safe institutions can
be affected and damaged by the financial distress of the unsafe
ones. For this purpose, risk measures have been
extensively theorized and analyzed, giving rise to a rich strand of
literature. Some of them might provide helpful techniques for risk
allocation among a number of distinct systemic agents. In this
respect, the coherent risk measures have been and are a crucial tool
for the assignment of risk capital, both for the richness of
properties due to their axiomatic structure and for their
adaptability to fit several financial frameworks. The basic studies
on coherent risk measures have been published by \cite{artzner_etal.1999} in 1999, who first provided an axiomatic characterization. Subsequently, in 2001 \cite{denault.2001}
developed the analysis by taking into account a cooperative game
structure, where players are the firms involved in risky financial
activities, and the risk induced by the net worth of firms is the
value of the characteristic function of the game, and the
subadditivity axiom naturally induces a cost cooperative game.\newline
\indent The coherence  of the Expected Shortfall (ES, hereafter) was taken into account by
\cite{acerbi_tasche.2002} in 2002, whereas an axiomatic approach was
adopted in 2005 by \cite{kalkbrener.2005}, to characterize the
related capital allocation procedures in portfolio management or
performance measurement. In recent years, further mathematical
results were achieved by \cite{stoica.2006} (equivalence between
properties of some coherent measures with no arbitrage conditions),
\cite{csoka_etal.2007, csoka_etal.2009} (refinement of measures
in a general equilibrium setup, stability of risk allocations),
\cite{kountzakis_polyrakis.2013} (applications to general
equilibrium models). Other relevant results in risk allocation
theory were provided in \cite{drehmann_tarashev.2013} and 
\cite{csoka_etal.2016}. Furthermore, as a slightly different tool, distortion risk measures
have been applied to a financial contagion framework 
by \cite{cherubini_mulinacci.2014}.\newline
\indent The new approach we are going to propose builds on an alternative
risk measure which is an extension of ES and which is supposed to
incorporate the effect of the distressed institutions on a safe one.
Such a measure, denoted by
$SCoES_{i\vert\mathcal{D}}^{\tau_1\vert \tau_2}$, represents the
expected value of a non--distressed institution $i$ conditional upon a
set $\mathcal{D}$ of institutions under distress, given two
different thresholds $\tau_1,\tau_2\in\left(0,1\right)$, specifically related to the institutions'
returns. Subsequently, in order to evaluate the marginal contribution of each
distressed institution to overall systemic risk, we will construct a
cost function. Such a function aims to measure the cost of risk by
evaluating the mean of the differences between the standard Expected
Shortfalls and the values of $SCoES$ of all non--distressed
institutions. This formula will be the characteristic value of a
cooperative game played by distressed institutions. Adopting a
classical approach, we will examine the properties of the game and
calculate the marginal impacts of each distressed institution,
pointing out the differences between the Shapley value and the
Banzhaf value of the game.\newline
\indent The introduced risk measurement framework is then adopted to empirically analyze the evolution of the $SCoES$ risk measure and the output of the associated cooperative game of risk for eight Eurozone sovereign Credit Default Swaps (CDS) over the period 2008--2014. Our example assumes the Germany as the \qmo safe\qmcsp country, and explores how the evolution of the remaining countries' debt conditions affect the health and financial stability of Germany considered as a \qmo proxy\qmcsp for the overall European system. As a byproduct of the proposed risk framework, the evolution of the total impact of the failure of the European system is monitored. Our results show that the risk of failure of the European system displays a transitory high level during period in between the Greece and Portugal bailouts (November 2010 -- May 2011), but effectively remains at high levels when the ECB president Mario Draghi announces the implementation of the Outright Monetary Transactions (OMT) and the European Stability Mechanism (ESM) in the months thereafter. It is only after the implementation of the OMT, and the ESM, that the systemic risk of Germany settles more permanently at a level that is roughly 60\% lower than during the crisis.\newline
%
%
\indent The remainder of the paper is structured as follows. In Section \ref{sec:risk_measures} the basic concepts about risk measures will be recalled, together with some axiomatic
details of coherent measures. Section \ref{sec:scoes_measure} introduces the indicators
that will be crucial to our analysis, whereas in Section \ref{sec:scoes_game} the
cooperative game of risk and its solution concepts are exposed and
discussed. Section \ref{sec:application} is devoted to the empirical application and Section \ref{sec:conclu}, conclusions and possible future developments are laid out.
%
%
%
\section{Risk measures in portfolio management}
\label{sec:risk_measures}
%
Here we will outline the standard notation for risk measurement in recent literature, largely borrowed from the seminal paper by Artzner \cite{artzner_etal.1999} and \cite{kalkbrener.2005}.
\par
Consider a finite set of states of nature $\Omega$, whose cardinality is $|\Omega|=m$. Call $X(\omega_j)$ a random variable indicating the final net worth of a position in state $\omega_j \in \Omega$ after a certain time interval, i.e., the possible profit and loss realization of a portfolio in state $\omega_j$. We can identify the set of all real-valued functions (which can also be viewed as the set of all risks) on $\Omega$ as $\mathbb{R}^m$, whose elements are of the kind $X=(X(\omega_1), \ldots, X(\omega_m))$. In the rigorous construction proposed in \cite{artzner_etal.1999}, a \textit{measure of risk} is a mapping $\rho: \mathbb{R}^m \longrightarrow \mathbb{R}$ such that $\rho(X)$ corresponds to the minimum amount of extra cash an agent has to add to her risky portfolio, in order to ensure that this investment is still acceptable to the regulator\footnote{This property relies on the \textit{acceptance sets}, which are axiomatized in \cite{artzner_etal.1999}.} when using a suitable reference instrument. The basic requirement concerns the price of the asset: 1 is the initial price of the asset and $r>0$ is the total return on the reference instrument at a final date $T$, in all possible states of nature.
In \cite{artzner_etal.1999} the Definition of a coherent measure is then provided based on an axiomatic structure.

\begin{definition}\label{coherentmeasure}
A function $\rho: \mathbb{R}^S \longrightarrow \mathbb{R}$ is called a \textbf{coherent measure of risk} if it satisfies the following axioms:
\begin{itemize}
\item[-] Monotonicity (M): for all $X, Y \in \mathbb{R}^m$ such that $Y \geq X$ (i.e. $Y(\omega_j) \geq X(\omega_j)$ for almost all $\omega_j \in \Omega$), $\rho(Y) \leq \rho(X)$.
\item[-] Subadditivity (S): for all $X, Y \in \mathbb{R}^m$, $\rho(X+Y) \leq \rho(X)+\rho(Y)$.
\item[-] Positive Homogeneity (PH): for all $X \in \mathbb{R}^m$ and $\lambda \in \mathbb{R}_+$, $\rho(\lambda X)=\lambda \rho(X)$.
\item[-] Translation Invariance (TI)\footnote{Property \textit{TI} is sometimes denoted as \textit{Risk Free Condition (RFC).}}: for all $X \in \mathbb{R}^m$ and $h \in \mathbb{R}$,  $$\rho(X+h r)=\rho(X)-h.$$
\end{itemize}
\end{definition}
\noindent We have to point out that the Value-at-Risk ($VaR$) satisfies all of them except Subadditivity, as is precisely exposed in \cite{artzner_etal.1999} as well as the standard Definition of $VaR$, which is recalled here:
\begin{definition}
Given $\tau \in [0,1]$ and the return on a reference instrument $r>0$, the $VaR$ at level $\tau$ of the final net worth $X$ with probability $\mathbb{P}\left\{\cdot \right\}$ is the negative of the quantile at level $\tau$ of $X/r$, i.e.
$$VaR_{\tau}=-\inf \left\{x \ | \ \mathbb{P} \left\{X \leq r \cdot x \right\} \geq \tau \right\}.$$
\end{definition}
\noindent Without loss of generality, in the remainder of the paper, $r$ will be normalized to $1$. \par
Back to coherent measures, when the outcomes are equiprobable, i.e. when the state of nature $\omega_j$ occurs with probability $p_j$ and $p_1= \cdots = p_m=1/m$, a special and relevant case can be treated, as was investigated by \cite{acerbi.2002}. In particular, we take into account an ordered statistics given by the ordered values of the $m$-tuple $X(\omega_1), \ldots, X(\omega_m)$, i.e. $\left\{\widetilde{X}_{1}, \ldots , \widetilde{X}_{m}\right\}$, rearranged in increasing order: $\widetilde{X}_{1} \leq \cdots \leq \widetilde{X}_{m}$. The definition of spectral measure of risk we are going to present is due to \cite{csoka_etal.2007}, who slightly modified Acerbi's original definition by employing a positive discount factor $\delta$, not necessarily equal to 1 as in \cite{acerbi.2002}.

 \begin{definition}\label{spectralriskmeasure}
 If the outcomes are equiprobable, given a vector $\Phi \in \mathbb{R}^m$ and a discount factor $\delta >0$, the risk measure $M_{\Phi}: \mathbb{R}^m \longrightarrow \mathbb{R}$ defined as follows:
 \begin{equation}
 M_{\Phi}(X)=-\delta \sum_{j=1}^m \Phi_j \widetilde{X}_{j},
 \label{MPhi}
 \end{equation}
 is called a \textbf{spectral measure of risk} if $\Phi$ satisfies the following axioms:
 \begin{itemize}
 \item[-] Nonnegativity (N1): $\Phi_j \geq 0$ for all $j=1, \ldots, m$.
 \item[-] Normalization (N2): $\Phi_1+ \cdots + \Phi_m=1$.
 \item[-] Monotonicity (M): $\Phi_j$ is non-increasing, i.e. for all $u, v \in \left\{1, \ldots, m \right\}$, $u<v$ implies $\Phi_u \geq \Phi_v$.
 \end{itemize}
 \end{definition}
A well-known spectral measure of risk is the one indicating the discounted average of the worst $\tau$ outcomes, i.e. the \textbf{$\tau$-Expected Shortfall} of $X$. For all $\tau \in \left\{1, \ldots, m \right\}$, it is given by
\begin{equation}
ES_{\tau}(X)=-\dfrac{\delta}{\tau} \sum_{j=1}^{\tau}
\widetilde{X}_{j}. \label{tauexpectedshortfalldiscrete}
\end{equation}
When $X$ is a continuous random variable, its formula reads as follows:
\begin{equation}
ES_{1-\lambda}(X)=-\dfrac{\delta}{1-\lambda} \int_{\widetilde{X} \leq VaR(\widetilde{X})=\tau} \widetilde{x} f(\widetilde{x}) dx,
\label{tauexpectedshortfallcontinuous}
\end{equation}
where $f(\cdot)$ is the law of $\widetilde{X}$ and $1-\lambda$ is
the related confidence level. The proof of the coherence of
$ES_{1-\lambda}(\cdot)$ in the continuous random variable setup can
be found in \cite{acerbi_tasche.2002}. 
%
\section{The $SCoES$ as a risk measure} 
\label{sec:scoes_measure}
%
Here we propose a generalisation of the \cite{adrian_brunnermeier.2016}'s $CoVaR$ and $CoES$, namely System--$CoVaR$ ($SCoVaR$) and System--$CoES$ ($SCoES$). The proposed risk measures aim to capture interconnections among multiple connecting market participants which is particularly relevant during periods of financial market crisis, when several institutions may contemporaneously experience distress instances.
\par Let $\mathcal{P}=\left\{1,\ldots,p\right\}$ be a set of $p$ institutions, and assume that the conditioning event is the distress of a subset of $\mathcal{P}$. Call
$\mathcal{D}=\left\{j_1,\ldots,j_d\right\}\subset \mathcal{P}$ the set of $d$ institutions potentially under distress, whose cardinality is such that $0< d < p$, hence meaning that at least one institution is not under distress and that at least one is under distress. If we consider a set $S \subseteq \mathcal{D}$ of distressed institutions, such set represents a group of institutions picked among the ones that may be under distress. As is usual in typical Cooperative Game Theory literature, we will denominate any group $S$ as a \textbf{coalition}.
\par \bigskip \bigskip \bigskip
\begin{center}
{\normalsize \setlength{\unitlength}{0.1mm}
\begin{figure}[h]
\begin{tikzpicture}
\tikz\draw[line width=0.6mm] (0,0) ellipse (2.5cm and 2cm);
\end{tikzpicture}
\begin{tikzpicture}
\tikz\draw[line width=1mm] (3.1,-1) circle (1.5cm);
\end{tikzpicture}
\begin{tikzpicture}
\tikz\draw[line width=0.4mm] (5.3,-3.5) ellipse (1cm and 0.6cm);
\end{tikzpicture}
\put(-550,310){\Huge{$\mathcal{D}$}}
\put(-760,370){\Huge{$\mathcal{P}$}}
\put(-550,450){\huge{$S$}}
\put(-1050,70){\textbf{Fig. 1}: The subset $\mathcal{D} \subset \mathcal{P}$ contains all possible coalitions $S$ of institutions in distress.}
\end{figure}}
\end{center}
Given the confidence levels $\tau_1,\tau_2$, we are going to define the $SCoVaR_{i\vert\mathcal{S}}^{\tau_1\vert \tau_2}$ of institution $i\in\mathcal{P}$ for all institutions not belonging to $\mathcal{D}$ as follows. We are going to assume that at least one distressed institution has a negative return, i.e., that there exists at least one $X_j<0$, to ensure possible positivity of the Expected Shortfall as has been defined by (\ref{tauexpectedshortfalldiscrete}), when $\delta, \tau >0$.\par
Call $F_{i | S} \left(\cdot, -VaR_{\tau_2}\left( \sum_{j_k \in S} X_{j_k} \right)\right)$ the joint cumulative density function of institution $i$ conditional on the set of institutions $S$ being under distress. Denote with $F_{S}(\cdot)$ the distribution function of the group of institutions $S$, provided that the involved random variable coincides with the sum of all returns of the institutions in $S$.

\begin{definition}\label{SCoVaR}
Let $X=\left(X_1,\ldots,X_p\right)$ be a vector of $p$ institution returns with probability $\mathbb{P}\{\cdot\}$. Given  a set $S \subseteq \mathcal{D}$ of institutions in distress and $\tau_1, \tau_2 \in [0, \ 1]$, for all $i\in\mathcal{P} \setminus \mathcal{D}$, the $SCoVaR_{i\vert S}^{\tau_1\vert \tau_2}$ is the following value:
    \begin{equation}
    SCoVaR_{i\vert S}^{\tau_1\vert \tau_2}=-\inf \left\{l \in \mathbb{R} \ | \ F_{i | S}(l, m) \geq \tau_1, \ \sum_{j_k \in S} X_{j_k} \leq m \right\},
    \label{SCOVAR2}
    \end{equation}
    where $m:=-\inf \left\{s \in \mathbb{R} \ | \ \mathbb{P} \left\{\sum_{j_k \in S} X_{j_k}\leq s \right\} \geq \tau_2 \right\}$.
\end{definition}
\noindent An alternative Definition of $SCoVaR_{i\vert\mathcal{S}}^{\tau_1\vert \tau_2}$ is the following one.

\begin{definition}\label{SCoVaR2}
Let $X=\left(X_1,\ldots,X_p\right)$ be a vector of $p$ institution returns with probability $\mathbb{P}\{\cdot\}$. Given  a set $S \subseteq \mathcal{D}$ of institutions in distress and $\tau_1, \tau_2 \in [0, \ 1]$, for all $i\in\mathcal{P} \setminus \mathcal{D}$, the $SCoVaR_{i\vert S}^{\tau_1\vert \tau_2}$ is the
maximum value $X_i^*$ taken by $X_i$ such that
\begin{equation}
\dfrac{\mathbb{P}\left\{\left\{X_i \leq X_i^* \right\} \bigcap  \left\{\sum_{j_k \in S} X_{j_k}\leq -VaR_{\tau_2}\left( \sum_{j_k \in S} X_{j_k} \right)\right\}\right\}}{\mathbb{P} \left\{\sum_{j_k \in S} X_{j_k}\leq -\text{VaR}_{\tau_2}\left( \sum_{j_k \in S} X_{j_k} \right)\right\}}  \geq \tau_1.
\label{SCOVAR}
\end{equation}
\end{definition}
\noindent Basically, $SCoVaR_{i\vert S}^{\tau_1\vert \tau_2}$ is the Value-at-Risk of an institution subject to the condition that the sum of the realizations of the institutions under distress do not exceed the Value-at-Risk of their sum, when two different confidence levels are in general taken into account. The two following Remarks aim to point out two circumstances where $SCoVaR$ coincides with the marginal $VaR$.

\begin{remark}\label{somma}
In this context it is quite natural to consider the sum as aggregated measure of risk since we are considering profits and losses. Of course alternative definitions are possible, as for example, the maximum loss of the distressed institutions, see, e.g., \cite{bernardi_etal.2016b} and discussion therein. 
\end{remark}

\begin{remark} \label{indipendence}
If all the returns of the institutions in $W$ are independent of all the returns of institutions in $\mathcal{D}$, then the joint c.d.f. $F_{i | S}(\cdot)$ becomes $F_i(\cdot)$, i.e. the c.d.f. of institution $i \in W$. Consequently,
\begin{equation*}
SCoVaR_{i\vert S}^{\tau_1\vert \tau_2}=-\inf \left\{l \in \mathbb{R} \ | \ \mathbb{P}\left\{X_i \leq l \right\} \geq \tau_1\right\}=VaR_{\tau_1}(X_i).
\end{equation*}
\end{remark}

\begin{remark} \label{SCoVaRemptyset}
When no institution is under distress, $S=\emptyset$, i.e. $X_{j_k}=0$ for all $j_k \in S$. In this case, (\ref{SCOVAR}) is well-defined too and in particular it collapses to the standard VaR. Namely, $$\mathbb{P} \left\{\sum_{j_k \in S} X_{j_k}\leq -VaR_{\tau_2}\left( \sum_{j_k \in S} X_{j_k} \right)\right\}=1 \ \Longrightarrow \ SCoVaR_{i\vert \emptyset}^{\tau_1\vert \tau_2}=VaR_{\tau_1}(X_i).$$
\end{remark}

\noindent The $SCoVaR$ is particularly useful to formulate the risk measure we are going to investigate.

\begin{definition}
Given $\tau_1, \tau_2 \in [0, \ 1]$ and a set $S$ of institutions in distress, the $SCoES_{i\vert S}^{\tau_1\vert \tau_2}$ is the expected value of institution $i\in\mathcal{P}\setminus \mathcal{D}$,
provided that it does not exceed $SCoVaR_{i\vert S}^{\tau_1\vert \tau_2}$ and  conditional upon the set of institutions $S$ being at the level of their joint $ES_{\tau_2}$-level:
\begin{equation}
SCoES_{i\vert S}^{\tau_1\vert \tau_2}\equiv \mathbb{E} \left[X_i\ | \ X_i \leq SCoVaR_{i\vert S}^{\tau_1\vert \tau_2}, \ \sum_{j_k \in S} X_{j_k}\leq ES_{\tau_2}\left(\sum_{j_k \in S} X_{j_k}\right)\right].
\label{SCoES2}
\end{equation}
\end{definition}

\noindent As in Remark \ref{SCoVaRemptyset}, (\ref{SCoES2}) can be evaluated when no distress occurs too, collapsing to the standard Expected Shortfall:
$$SCoES_{i\vert \emptyset}^{\tau_1\vert \tau_2}=\mathbb{E} \left[X_i\ | X_i \leq VaR_{\tau_1}(X_i) \right]=ES_{\tau_1}(X_i).$$
A short explanation may be helpful to clarify the formulation of (\ref{SCoES2}): the condition $\sum_{j_k \in S} X_{j_k}\leq ES_{\tau_2}\left(\sum_{j_k \in S} X_{j_k}\right)$ is always verified when all $X_{j_k}<0$. On the other hand, just one negative institution return is enough to determine an open interval for $\delta$ such that the condition holds. More precisely, if $X_{j_1}, \ldots, X_{j_l}$ are negative, where $j_k \in S$ for $k=1, \ldots, l$, the condition boils down to: $\delta \geq - \tau_2 \frac{\sum_{j_k \in S} X_{j_k}}{\sum_{k=1}^l X_{j_k}}$.
Such an estimate is trivially true whenever all institution returns in a coalition $S$ are negative\footnote{Also note that the estimates on $\delta$ are as many as the possible coalitions $S$ except the empty set, i.e. $2^{|\mathcal{D}|}-1$, hence there are at most $2^{|\mathcal{D}|}-1$ levels of $\delta$ that must be exceeded. Because the choice of $\delta$ in the definition of (\ref{tauexpectedshortfalldiscrete}) is arbitrary, taking the maximum level among such values implies that such condition in (\ref{SCoES2}) is always satisfied, consequently $SCoES_{i\vert S}^{\tau_1\vert \tau_2}\equiv \mathbb{E} \left[X_i\ | \ X_i \leq SCoVaR_{i\vert S}^{\tau_1\vert \tau_2}\right]$ for all $i \in \mathcal{P}\setminus \mathcal{D}$.}. \par
Getting to analyze possible relations between $SCoVaR$ and $SCoES$, we can prove some results.

\begin{proposition}\label{propertyofSCoES}
Given two coalitions $S, S^\prime \in 2^{\mathcal{D}}$, if the following hypotheses are verified:
\begin{enumerate}
\item $SCoVaR_{i\vert S^\prime}^{\tau_1\vert \tau_2}>SCoVaR_{i\vert S}^{\tau_1\vert \tau_2}$;
\item $\sum_{j_k \in S^\prime} X_{j_k}\leq ES_{\tau_2}\left(\sum_{j_k \in S^\prime} X_{j_k}\right)$;
\item $\sum_{j_k \in S} X_{j_k}\leq ES_{\tau_2}\left(\sum_{j_k \in S} X_{j_k}\right)$,
\end{enumerate}
then
$SCoES_{i\vert S^\prime}^{\tau_1\vert \tau_2} \geq SCoES_{i\vert S}^{\tau_1\vert \tau_2}$.
\end{proposition}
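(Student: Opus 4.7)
The plan is to exploit hypotheses~(2) and (3) to strip the coalition-sum piece from the conditioning event in the definition of $SCoES$, thereby reducing both $SCoES_{i\vert S}^{\tau_1\vert \tau_2}$ and $SCoES_{i\vert S^\prime}^{\tau_1\vert \tau_2}$ to ordinary one-threshold conditional expectations of $X_i$. The conclusion then rests on the elementary fact that the truncated mean $t \mapsto \mathbb{E}[X_i\mid X_i\leq t]$ is nondecreasing in $t$, which together with hypothesis~(1) immediately yields the stated inequality.

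For the first, reductive step I would appeal to the observation recorded in the footnote accompanying~(\ref{SCoES2}): since the discount factor $\delta$ entering~(\ref{tauexpectedshortfalldiscrete}) is a free parameter, one may choose it large enough that both tail events $\{\sum_{j_k\in S}X_{j_k}\leq ES_{\tau_2}(\sum_{j_k\in S}X_{j_k})\}$ and $\{\sum_{j_k\in S^\prime}X_{j_k}\leq ES_{\tau_2}(\sum_{j_k\in S^\prime}X_{j_k})\}$ coincide with the whole sample space---this being exactly what hypotheses~(2) and (3) formalize. The coalition-sum conditioning in~(\ref{SCoES2}) is then vacuous for both $S$ and $S^\prime$, yielding
\begin{equation*}
SCoES_{i\vert T}^{\tau_1\vert \tau_2} = \mathbb{E}\left[X_i\mid X_i\leq SCoVaR_{i\vert T}^{\tau_1\vert \tau_2}\right],\qquad T\in\{S,S^\prime\}.
\end{equation*}

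For the second, monotonicity step I would set $a:=SCoVaR_{i\vert S}^{\tau_1\vert \tau_2}$ and $b:=SCoVaR_{i\vert S^\prime}^{\tau_1\vert \tau_2}$, so hypothesis~(1) gives $b>a$. Assuming $\mathbb{P}(X_i\leq a)>0$ and $\mathbb{P}(a<X_i\leq b)>0$ (otherwise the two conditional expectations are either ill-defined or trivially equal), the law of total expectation on $\{X_i\leq b\}=\{X_i\leq a\}\cup\{a<X_i\leq b\}$ gives
\begin{equation*}
\mathbb{E}[X_i\mid X_i\leq b] = \frac{\mathbb{P}(X_i\leq a)}{\mathbb{P}(X_i\leq b)}\,\mathbb{E}[X_i\mid X_i\leq a] + \frac{\mathbb{P}(a<X_i\leq b)}{\mathbb{P}(X_i\leq b)}\,\mathbb{E}[X_i\mid a<X_i\leq b].
\end{equation*}
Since $\mathbb{E}[X_i\mid a<X_i\leq b]>a\geq \mathbb{E}[X_i\mid X_i\leq a]$, the right-hand side is a convex combination of two numbers both at least $\mathbb{E}[X_i\mid X_i\leq a]$, which yields the desired inequality $SCoES_{i\vert S^\prime}^{\tau_1\vert \tau_2}\geq SCoES_{i\vert S}^{\tau_1\vert \tau_2}$.

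I expect the only genuinely delicate point to be the reduction step: hypotheses~(2) and (3) as stated compare random coalition sums to the deterministic scalars $ES_{\tau_2}(\cdot)$, and they can be made sense of only as almost-sure inequalities, via the $\delta$-calibration explained in the footnote to~(\ref{SCoES2}). Once this interpretation is accepted, the rest of the argument is a routine truncated-mean calculation whose only fine point is handling atoms of $X_i$ at $a$ or $b$, which affect whether the final inequality is strict or an equality but not its direction.
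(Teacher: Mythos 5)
Your proposal is correct and takes essentially the same route as the paper's own (two-line) proof: hypotheses (2)--(3) are used to discharge the coalition-sum condition so that $SCoES_{i\vert T}^{\tau_1\vert \tau_2}$ reduces to the single-threshold truncated mean $\mathbb{E}\left[X_i \mid X_i \leq SCoVaR_{i\vert T}^{\tau_1\vert \tau_2}\right]$, and hypothesis (1) then yields the ordering. You merely supply details the paper leaves implicit --- the total-expectation decomposition proving that $t \mapsto \mathbb{E}\left[X_i \mid X_i \leq t\right]$ is nondecreasing, and the almost-sure reading of (2)--(3) via the $\delta$-calibration footnote --- so your write-up is a more careful rendering of the identical argument.
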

\begin{proof}
If $S,S^\prime \in 2^{\mathcal{D}}$, the first hypothesis ensures that
$$\mathbb{E} \left[X_i\ | \ X_i \leq SCoVaR_{i\vert S^\prime}^{\tau_1\vert \tau_2}\right] \geq \mathbb{E} \left[X_i\ | \ X_i \leq SCoVaR_{i\vert S}^{\tau_1\vert \tau_2}\right],$$
whereas the second and the third hypotheses guarantee that
$SCoES_{i\vert S^\prime}^{\tau_1\vert \tau_2}$ and $SCoES_{i\vert S}^{\tau_1\vert \tau_2}$ are well-defined, consequently
$$SCoES_{i\vert S^\prime}^{\tau_1\vert \tau_2} \geq SCoES_{i\vert S}^{\tau_1\vert \tau_2}.$$
\end{proof}
\noindent In the remainder of the paper, whenever there is no misunderstanding, we are going to simply denote the above quantities with $ES$, $SCoES$, $VaR$ and $SCoVaR$, to lighten the notation.
%
\section{A cooperative game for risk allocation induced by $SCoES$}
\label{sec:scoes_game}
%
The risk allocation problems were introduced by 
\cite{denault.2001}, where the problem of allocating the risk of
a given firm, as measured by a coherent measure of risk, among its
$N$ constituents, was taken into account, closely resembling the
typical Cooperative Game Theory approach. In \cite{denault.2001}
cooperative games of cost were employed to model risk allocation
problems, and the chosen solution concepts were the Shapley value
and the Aumann--Shapley value. Such approach was subsequently adopted
and improved in \cite{csoka_etal.2009}, where risk allocation games and
totally balanced games are compared to ensure the existence of a
stable allocation of risk. In particular, they define a risk
environment characterized by a set of portfolios, a set of states of
nature, a discrete probability density of realization of states, a
matrix of realization vectors and a coherent measure of risk, from
which they construct and analyze a risk allocation game. In both
approaches, the portfolios of a firm are looked upon as the players
of a subadditive cooperative game.

In our setting, we are ready to apply the measures defined in
Section \ref{sec:scoes_measure} to institutional circumstances where distress
occurs, and in particular we are going to rely on some typical tools
borrowed from Cooperative Game Theory. In more details, we will look
upon any possible set of distressed institutions as a coalition of a
cooperative game (or TU-game, see \citealt{owen.1995}). The effect of
distress on the remaining institutions, corresponding to risk of
contagion, will be evaluated by means of a cost function.
\par Call $W$ the set of institutions not belonging to
$\mathcal{D}$, i.e. $W=\mathcal{P} \setminus \mathcal{D}$. We can
evaluate the cost of risk induced by any coalition $S \subseteq
\mathcal{D}$ by taking a weighted arithmetic mean over all
differences between the unconditioned $ES_{\tau_1}$ and the $SCoES$
for all institutions in $W$, i.e.
\begin{equation}
c_W(S)=\dfrac{\sum_{i \in W}\alpha_i \left[ES_{\tau_1}(X_i)-SCoES_{i\vert S}^{\tau_1\vert \tau_2}\right]}{|W|},
\label{cW}
\end{equation}
where $\alpha_i \geq 0$ for all $i=1, \ldots, |W|$,
$\sum_{i=1}^{|W|}\alpha_i=|W|$, for all $S \subseteq \mathcal{D}$.

In the frame of risk allocation, we introduce a cooperative game
$\Gamma=(c_W, \mathcal{D})$, where $\mathcal{D}$ represents the the
set of involved portfolios and $c:2^\mathcal{D} \longrightarrow
\mathbb{R}$ is as in (\ref{cW}), and assigns a cost to each
coalition $S \subseteq \mathcal{D}$.

The cooperative game approach appears very suitable, in that in an
uncertain financial framework it allows to take into account all
possible combinations of institutions undergoing distress.

Moreover, the couple $(c_W, \mathcal{D})$ actually defines a
cooperative game. In fact, when no institution in $\mathcal{D}$ is
under distress, then $c_W(\emptyset)=0$ because all the differences
in the numerator of (\ref{cW}) vanish. Essentially, this hypothesis,
which is necessary to define a cooperative game on $\mathcal{D}$,
may have a clear-cut financial interpretation: all the safe
institutions are collected in $W$, meaning that all of them are
secured beyond a reasonable doubt. They can be viewed as states or
companies issuing either government bonds or securities guaranteed
by top-quality collateral, namely all the kinds of agents which do
not involve any risk factors. Also note the positive sign in
(\ref{cW}): in standard payoff cooperative games such sign is
reversed. But because we are assuming that below the confidence
level $\tau_1$ some realizations $X_{J_k}$ are negative, positivity
of $ES_{\tau_1}(\cdot)$ is ensured, hence the level of risk induced
by distress can be positive. However, it may occur that for some
coalition $S$, $c_W(S)$ is non-positive, but this would mean that
the contagion is even less likely to spread from such a group of
institutions to the non-distressed ones.\newline
\indent Formula (\ref{cW}) needs some further explanation, in terms of what the
differences between $ES$ and $SCoES$ actually measure. Each
difference provides the spread between the standard risk and the
risk which is correlated to the distress of a coalition $S$,
composed of one institution at least. In order to completely assess
the risk effect caused by any coalition $S$, the sum of those
differences is taken over the whole set of safe institutions $W$.
Perhaps some structural differences may occur among the safe
institutions, including insurance contracts, implementation of
hedging strategies, and so on. Such heterogeneity can be captured by
weights $\alpha_i$ in (\ref{cW}), which can also be interpreted as
directly dependent of each single institution, in compliance with
its size or its systemic relevance. In order to lighten the
notation, we are going to hypothesize a simplified scenario where
all institutions' weights are equal, then we are going to posit
$\alpha_1=\cdots=\alpha_{|W|}=1$.\newline
\indent The issue concerning the properties of the game (\ref{cW}) is
somewhat complex, due to the fact that the $SCoES$ of an institution
subject to an external distress is a kind of measure of correlation,
or also a measure of how distress spreads its contagion towards
non-distressed institutions. Consequently, the standard axioms
associated to coherent risk measures can hardly be demonstrated.
Instead of an axiomatization, we are going to outline some
characteristics of $c_W(\cdot)$, which are listed in the next
Propositions. Some of these properties resemble the axioms stated by
\cite{denault.2001}.

\begin{proposition}\label{propertiesofcW1}
Given two coalitions $S, S^\prime \in 2^{\mathcal{D}}$, if the following hypotheses are verified:
\begin{enumerate}
\item $SCoVaR_{i\vert S^\prime}^{\tau_1\vert \tau_2}>SCoVaR_{i\vert S}^{\tau_1\vert \tau_2}$;
\item $\sum_{j_k \in S^\prime} X_{j_k}\leq ES_{\tau_2}\left(\sum_{j_k \in S^\prime} X_{j_k}\right)$;
\item $\sum_{j_k \in S} X_{j_k}\leq ES_{\tau_2}\left(\sum_{j_k \in S} X_{j_k}\right)$,
\end{enumerate}
then
$c_W(S^\prime) \leq c_W(S)$.
\end{proposition}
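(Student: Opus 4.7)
The plan is to reduce this statement directly to Proposition \ref{propertyofSCoES}. First I would observe that hypotheses (1)--(3) here coincide, for a fixed institution $i$, with the hypotheses of that proposition. Reading (1) as holding for every $i\in W$ (which is the natural scope, since $c_W$ depends on all safe institutions simultaneously), Proposition \ref{propertyofSCoES} yields
\[
SCoES_{i\vert S^\prime}^{\tau_1\vert \tau_2} \geq SCoES_{i\vert S}^{\tau_1\vert \tau_2} \qquad \text{for every } i \in W.
\]

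The remainder is elementary. Reversing the inequality gives, for every $i \in W$,
\[
ES_{\tau_1}(X_i) - SCoES_{i\vert S^\prime}^{\tau_1\vert \tau_2} \;\leq\; ES_{\tau_1}(X_i) - SCoES_{i\vert S}^{\tau_1\vert \tau_2}.
\]
Since the weights $\alpha_i$ are non-negative and $|W|>0$ (as $d<p$), I would multiply by $\alpha_i$, sum over $i \in W$, and divide by $|W|$. By the definition (\ref{cW}) of the cost function, this delivers exactly $c_W(S^\prime) \leq c_W(S)$.

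The only genuine point of care, which I would flag explicitly in the write-up, is the quantifier scope of hypothesis (1): a single pointwise $SCoVaR$ inequality for one institution is not sufficient to control the sum in (\ref{cW}). I would therefore state and apply hypothesis (1) uniformly over all $i \in W$ (or at least all $i$ with $\alpha_i>0$). Apart from this bookkeeping, the proposition is a direct corollary of Proposition \ref{propertyofSCoES} together with the monotonicity of weighted averages, so there is no substantive obstacle in the argument.
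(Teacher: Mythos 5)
Your proof is correct and takes essentially the same route as the paper, whose entire proof is the single line that the claim follows directly from the inequality in Proposition \ref{propertyofSCoES}. You merely make explicit two things the paper leaves tacit---that hypothesis (1) must be read as holding for every $i \in W$ (a quantifier-scope point worth flagging, since the statement's free index $i$ is otherwise ambiguous), and that nonnegative weights $\alpha_i$ and division by $|W|$ turn the pointwise $SCoES$ inequality into $c_W(S^\prime) \leq c_W(S)$ via definition (\ref{cW})---both of which are sound.
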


\begin{proof}
It follows directly from the inequality in Proposition \ref{propertyofSCoES}.
\end{proof}

In the following Proposition, the expression $\lambda S$ means that
all the institution returns in $S$ are multiplied by $\lambda$, i.e.
$\lambda X_{j_k}$, for all $j_k \in S\subseteq\mathcal{D}$.

\begin{proposition}\label{propertiesofcW2}
For each $\lambda \in \mathbb{R}_+$,
$c_W(\lambda S)=c_W(S)$.
 \end{proposition}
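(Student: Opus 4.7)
The plan is to reduce the equality $c_W(\lambda S) = c_W(S)$ to the invariance of the single ingredient in formula (\ref{cW}) that actually depends on the coalition, namely $SCoES_{i\vert S}^{\tau_1\vert \tau_2}$. The term $ES_{\tau_1}(X_i)$ inside the sum is a function of the safe institution $i \in W$ alone, and is unaffected by how we rescale the returns of the distressed coalition; the weights $\alpha_i$ and the cardinality $|W|$ are likewise untouched. So the whole statement collapses to showing
\[
SCoES_{i\vert \lambda S}^{\tau_1\vert \tau_2}
\;=\; SCoES_{i\vert S}^{\tau_1\vert \tau_2}
\qquad \text{for every } i \in W \text{ and every } \lambda \in \mathbb{R}_+.
\]

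The key tool is positive homogeneity of both $VaR$ and $ES$. For $\lambda>0$ the map $x\mapsto \lambda x$ preserves the ordering of the realisations, so from the discrete definition (\ref{tauexpectedshortfalldiscrete}) one reads off $ES_{\tau_2}\!\left(\lambda \sum_{j_k\in S} X_{j_k}\right) = \lambda\, ES_{\tau_2}\!\left(\sum_{j_k\in S} X_{j_k}\right)$, and analogously from Definition~4 one obtains $VaR_{\tau_2}\!\left(\lambda \sum_{j_k\in S} X_{j_k}\right) = \lambda\, VaR_{\tau_2}\!\left(\sum_{j_k\in S} X_{j_k}\right)$. Dividing by $\lambda>0$, which preserves the direction of the inequalities, the two distress events
\[
\Bigl\{\sum_{j_k\in S}\lambda X_{j_k}\le -VaR_{\tau_2}\bigl(\textstyle\sum_{j_k\in S}\lambda X_{j_k}\bigr)\Bigr\},
\quad
\Bigl\{\sum_{j_k\in S}\lambda X_{j_k}\le ES_{\tau_2}\bigl(\textstyle\sum_{j_k\in S}\lambda X_{j_k}\bigr)\Bigr\}
\]
coincide with the corresponding events for the unscaled coalition $S$.

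With this observation, I would first invoke Definition~\ref{SCoVaR2}: the ratio appearing in (\ref{SCOVAR}) only depends on $S$ through the distress event just discussed, and on $i$ through $\{X_i\le X_i^*\}$; since the conditioning event is the same for $S$ and for $\lambda S$, the maximum $X_i^\ast$ realising the inequality is identical, yielding $SCoVaR_{i\vert \lambda S}^{\tau_1\vert \tau_2} = SCoVaR_{i\vert S}^{\tau_1\vert \tau_2}$. Plugging this equality and the invariance of the $ES$-distress event into (\ref{SCoES2}) shows that the conditional expectation defining $SCoES$ is taken over the same event in both cases, hence $SCoES_{i\vert \lambda S}^{\tau_1\vert \tau_2}=SCoES_{i\vert S}^{\tau_1\vert \tau_2}$. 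Substituting back into (\ref{cW}) then gives $c_W(\lambda S)=c_W(S)$.

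The only mildly delicate point is making the positive-homogeneity step airtight in the discrete, possibly non-strict setting: one has to observe that $\lambda>0$ preserves both the ordering of the $X_{j_k}$'s entering $ES_{\tau_2}$ and the direction of the weak inequalities defining $VaR_{\tau_2}$ and the distress events, so that no boundary case is lost when dividing by $\lambda$. Once this is recorded, the rest of the argument is a clean substitution, and no coalition-game machinery is needed.
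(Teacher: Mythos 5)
Your proof is correct and takes essentially the same route as the paper's: positive homogeneity of $VaR$ applied through Definition \ref{SCoVaR2} yields $SCoVaR_{i\vert \lambda S}^{\tau_1\vert \tau_2}=SCoVaR_{i\vert S}^{\tau_1\vert \tau_2}$, positive homogeneity of $ES$ shows the conditioning event in (\ref{SCoES2}) is unchanged, hence $SCoES_{i\vert \lambda S}^{\tau_1\vert \tau_2}=SCoES_{i\vert S}^{\tau_1\vert \tau_2}$, and substitution into (\ref{cW}) concludes. Your closing remark on preserving the ordering and the weak inequalities under scaling only makes explicit what the paper leaves implicit (the paper loosely invokes ``linearity of $ES$'' where positive homogeneity is what is used), so the arguments are in substance identical.
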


\begin{proof}
Applying Definition \ref{SCoVaR2} to $\lambda S$, for all $\lambda>0$, implies that
$SCoVaR_{i\vert \lambda S}^{\tau_1\vert \tau_2}$ is the maximum $X_i^*$ such that
$$\dfrac{\mathbb{P}\left\{\left\{X_i \leq X_i^* \right\} \bigcap  \left\{\sum_{j_k \in S} \lambda X_{j_k}\leq -VaR_{\tau_2}\left( \sum_{j_k \in S} \lambda X_{j_k} \right)\right\}\right\}}{\mathbb{P} \left\{\sum_{j_k \in S} \lambda X_{j_k}\leq -VaR_{\tau_2}\left( \sum_{j_k \in S} \lambda X_{j_k} \right)\right\}}  \geq \tau_1.$$
By the positive homogeneity of $VaR$ (see either \citealt{artzner_etal.1999} or \citealt{mcneil_etal.2015}, p. 74), that expression coincides with
(\ref{SCOVAR}), hence $SCoVaR_{i\vert \lambda S}^{\tau_1\vert
\tau_2}=SCoVaR_{i\vert  S}^{\tau_1\vert \tau_2}$.\newline 
\noindent For all $S\in 2^{\mathcal{D}}$, linearity of $ES$ can be employed in the
expression of $SCoES_{i\vert \lambda S}^{\tau_1\vert \tau_2}$:
\begin{align}
SCoES_{i\vert \lambda S}^{\tau_1\vert \tau_2}&\equiv \mathbb{E} \left[X_i\ | \ X_i \leq SCoVaR_{i\vert \lambda S}^{\tau_1\vert \tau_2}, \ \sum_{j_k \in S} \lambda X_{j_k}\leq ES_{\tau_2}\left(\sum_{j_k \in S} \lambda X_{j_k}\right)\right]\nonumber\\
&=\mathbb{E} \left[X_i\ | \ X_i \leq SCoVaR_{i\vert S}^{\tau_1\vert \tau_2}, \ \sum_{j_k \in S} X_{j_k}\leq ES_{\tau_2}\left(\sum_{j_k \in S} X_{j_k}\right)\right]\nonumber\\
&=SCoES_{i\vert S}^{\tau_1\vert \tau_2}.\nonumber
\end{align}
Finally, $c_W(\lambda S)$ can be written as follows:
\begin{align}
c_W(\lambda S)&=\dfrac{\sum_{i \in W} \left[ES_{\tau_1}(X_i)-SCoES_{i\vert \lambda S}^{\tau_1\vert \tau_2}\right]}{|W|}\nonumber\\
&=\dfrac{\sum_{i \in W} \left[ES_{\tau_1}(X_i)-SCoES_{i\vert S}^{\tau_1\vert \tau_2}\right]}{|W|}
=c_W(S).
\end{align}
\end{proof}

\begin{proposition}
 For all $S \subseteq \mathcal{D}$ such that all the returns $X_i$, where $i \in W$, are independent of all returns $X_{j_k}$, where $j_k \in S$, $c_W(S)=c_W(\emptyset)=0$.
\end{proposition}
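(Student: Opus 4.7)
The plan is to reduce the claim to two facts that have essentially been established earlier: the characterisation of $SCoVaR$ under independence from Remark \ref{indipendence}, and the collapse of $SCoES_{i\vert \emptyset}^{\tau_1\vert \tau_2}$ to $ES_{\tau_1}(X_i)$ already noted after Definition \ref{SCoES2}. The only genuinely new computation is to upgrade the $SCoVaR$--level independence statement to an $SCoES$--level statement.

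First I would invoke Remark \ref{indipendence}: for any $i \in W$ whose return is independent of $\{X_{j_k}\}_{j_k \in S}$, we have $SCoVaR_{i\vert S}^{\tau_1\vert \tau_2} = VaR_{\tau_1}(X_i)$. Next I would turn to $SCoES_{i\vert S}^{\tau_1\vert \tau_2}$ as defined in (\ref{SCoES2}). Writing $A := \{X_i \leq SCoVaR_{i\vert S}^{\tau_1\vert \tau_2}\}$ and $B := \{\sum_{j_k \in S} X_{j_k} \leq ES_{\tau_2}(\sum_{j_k \in S} X_{j_k})\}$, the event $A$ depends only on $X_i$ while $B$ depends only on $\{X_{j_k}\}_{j_k \in S}$, so by the assumed independence $A$ and $B$ are independent, and hence $X_i \mathbf{1}_A$ is independent of $\mathbf{1}_B$. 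Consequently
\begin{equation*}
SCoES_{i\vert S}^{\tau_1\vert \tau_2} = \mathbb{E}[X_i \mid A \cap B] = \mathbb{E}[X_i \mid A] = \mathbb{E}[X_i \mid X_i \leq VaR_{\tau_1}(X_i)] = ES_{\tau_1}(X_i),
\end{equation*}
where in the last step I use the already-noted identification $SCoVaR_{i\vert S}^{\tau_1\vert \tau_2} = VaR_{\tau_1}(X_i)$.

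With this in hand, every difference $ES_{\tau_1}(X_i) - SCoES_{i\vert S}^{\tau_1\vert \tau_2}$ appearing in the numerator of (\ref{cW}) vanishes for $i \in W$, so $c_W(S) = 0$. For the empty-coalition case, the remark following Definition \ref{SCoES2} gives $SCoES_{i\vert \emptyset}^{\tau_1\vert \tau_2} = ES_{\tau_1}(X_i)$ for every $i \in W$, which likewise forces $c_W(\emptyset) = 0$, completing the chain of equalities.

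The only mildly delicate step is the passage from independence of the family $\{X_i, X_{j_k} : j_k \in S\}$ to the factorisation of the conditional expectation above; this just requires noting that both conditioning events are measurable with respect to independent $\sigma$-algebras, so conditioning on $B$ has no effect once we condition on $A$. Everything else is bookkeeping against the identities already recorded in the paper.
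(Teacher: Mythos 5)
Your proof is correct and takes essentially the same route as the paper's: invoke Remark \ref{indipendence} to identify $SCoVaR_{i\vert S}^{\tau_1\vert \tau_2}$ with $VaR_{\tau_1}(X_i)$, conclude $SCoES_{i\vert S}^{\tau_1\vert \tau_2}=ES_{\tau_1}(X_i)$ for every $i\in W$, and observe that all differences in the numerator of \eqref{cW} vanish. The only difference is that you spell out the factorisation $\mathbb{E}[X_i\mid A\cap B]=\mathbb{E}[X_i\mid A]$ via independence of the two conditioning events, a step the paper's one-line proof asserts implicitly when it jumps from the $SCoVaR$-level remark to the $SCoES$-level identity.
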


\begin{proof}
For all institution returns $X_i$ which are independent of all $X_{j_k}$ (see Remark \ref{indipendence}), where $i \in W$ and $j_k \in S \subseteq \mathcal{D}$, we have that $SCoES_{i\vert S}^{\tau_1\vert \tau_2}=SCoES_{i\vert \emptyset}^{\tau_1\vert \tau_2}=ES_{\tau_1}[X_i]$. If this holds for all $i \in W$, the proof is complete.
\end{proof}

\noindent Subadditivity is a key feature of cost allocation games. Recall that
$\Gamma$ is \textit{subadditive} when its cost function is
subadditive, i.e. for all $S, T \in 2^\mathcal{D}$ such that $S \cap
T =\emptyset$, $c_W(S \cup T) \leq c_W(S)+c_W(T)$  (e.g. \citealt{anily_haviv.2014}). As is shown in the next Proposition, the
game$(c_W, \mathcal{D})$ is not always subadditive. In particular,
some assumptions on the related $ES$ and $SCoES$ are supposed to
hold.

\begin{proposition}\label{propertiesofcW3}
If $\forall \ i \in W$, $ES_{\tau_1}(X_i)>0$ and $SCoES$ is superadditive, i.e.
\begin{equation*}
SCoES_{i\vert S \cup T}^{\tau_1\vert \tau_2} \geq SCoES_{i\vert S}^{\tau_1\vert \tau_2} + SCoES_{i\vert T}^{\tau_1\vert \tau_2},
\end{equation*}
then $(c_W, \mathcal{D})$ is subadditive.
\end{proposition}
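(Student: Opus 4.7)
My plan is to unpack the definition of $c_W$ on the three coalitions $S$, $T$, and $S \cup T$ (with $S \cap T = \emptyset$), and then examine the quantity $c_W(S) + c_W(T) - c_W(S\cup T)$ directly.

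Writing out each term according to \eqref{cW} with $\alpha_i \equiv 1$, and noticing that in $c_W(S) + c_W(T) - c_W(S \cup T)$ the $ES_{\tau_1}(X_i)$ contributions combine with coefficients $+1 + 1 - 1 = +1$, while the $SCoES$ contributions combine with coefficients $-1 -1 +1$, one obtains the clean identity
\begin{equation*}
c_W(S) + c_W(T) - c_W(S \cup T) = \frac{1}{|W|} \sum_{i \in W} \Bigl[ ES_{\tau_1}(X_i) + SCoES_{i \vert S \cup T}^{\tau_1 \vert \tau_2} - SCoES_{i \vert S}^{\tau_1 \vert \tau_2} - SCoES_{i \vert T}^{\tau_1 \vert \tau_2} \Bigr].
\end{equation*}
At this point the two hypotheses play complementary roles. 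First, the superadditivity assumption on $SCoES$ yields, for every $i \in W$,
\begin{equation*}
SCoES_{i \vert S \cup T}^{\tau_1 \vert \tau_2} - SCoES_{i \vert S}^{\tau_1 \vert \tau_2} - SCoES_{i \vert T}^{\tau_1 \vert \tau_2} \geq 0.
\end{equation*}
Second, the assumption $ES_{\tau_1}(X_i) > 0$ ensures that the remaining summand is strictly positive. Hence every bracket on the right-hand side is non-negative, and the whole sum is non-negative. Dividing by $|W| > 0$ preserves the sign, so $c_W(S) + c_W(T) - c_W(S \cup T) \geq 0$, which is exactly subadditivity.

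Since the argument is essentially a one-line algebraic rearrangement followed by termwise application of the two hypotheses, I do not expect any serious obstacle. The only point worth being careful about is the bookkeeping of coefficients when expanding the three $c_W$-terms, and the observation that the hypothesis $ES_{\tau_1}(X_i) > 0$ is used only to guarantee that the $ES$-contribution does not flip the sign (so that superadditivity of $SCoES$ alone would already suffice, but the authors state the stronger form presumably to stay consistent with the positivity convention adopted earlier in the section). Disjointness of $S$ and $T$ is not needed explicitly in the computation — it enters only through the standard definition of subadditivity used in the statement.
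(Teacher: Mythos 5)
Your proof is correct and essentially identical to the paper's: the authors likewise reduce everything to the single identity
$c_W(S\cup T)-c_W(S)-c_W(T)=\frac{1}{|W|}\sum_{i\in W}\bigl[SCoES_{i\vert S}^{\tau_1\vert \tau_2}+SCoES_{i\vert T}^{\tau_1\vert \tau_2}-SCoES_{i\vert S\cup T}^{\tau_1\vert \tau_2}-ES_{\tau_1}(X_i)\bigr]$
(the negative of yours) and then apply the two hypotheses termwise. One correction to your closing parenthetical, though: superadditivity of $SCoES$ alone would \emph{not} suffice, because $ES_{\tau_1}(X_i)$ can be negative (the paper remarks on exactly this right after the proposition), in which case the $ES$ contribution could flip the sign of the bracket, so the hypothesis $ES_{\tau_1}(X_i)>0$ (or at least nonnegativity) is genuinely needed and not merely a positivity convention.
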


\begin{proof}
By definition of subadditivity, we can note that for all $S,T \in 2^{\mathcal{D}}$, where $S \cap T=\emptyset$:
\begin{align}
&c_W(S \cup T)-c_W(S)-c_W(T)=\nonumber\\
&\qquad\qquad=\dfrac{\sum_{i \in W}\left[SCoES_{i\vert S}^{\tau_1\vert \tau_2} + SCoES_{i\vert T}^{\tau_1\vert \tau_2} -SCoES_{i\vert S \cup T}^{\tau_1\vert \tau_2}-ES_{\tau_1}(X_i)\right]}{|W|},\nonumber
\end{align}
then, if $\forall \ i \in W$, $ES_{\tau_1}(X_i)>0$ and $SCoES$ is superadditive, i.e.
\begin{equation*}
SCoES_{i\vert S \cup T}^{\tau_1\vert \tau_2} \geq SCoES_{i\vert S}^{\tau_1\vert \tau_2} + SCoES_{i\vert T}^{\tau_1\vert \tau_2},
\end{equation*}
then $c_W(\cdot)$ is subadditive.
\end{proof}
\noindent However, it is difficult and restrictive to impose this condition, because the very definition of $ES_{\tau_1}$ allows both its possible positivity and negativity, depending on the level at which the worst outcomes are taken.
%
\subsection{Risk allocation}
\label{sec:risk_allocation}
%
A typical and well-known application of cooperative games is the determination of suitable allocations among players who get a share of a total amount, which is a benefit when they play a payoff game (generally superadditive) and a cost when they play a cost game (generally subadditive). In this case, by Proposition \ref{propertiesofcW3}, subadditivity is not ensured, but the characteristic function of the game represents a contagion risk induced by distress of some institutions, hence its interpretation as a cost game sounds intuitive and natural.
\par
A complete presentation of the several solution concepts and allocation rules in cooperative games can be found in \cite{owen.1995}. The two main values we are going to apply to our setup are the Shapley-Shubik (first introduced in 1953, see \citealt{shapley.1988}) and the Banzhaf-Coleman (which was formulated in 1965, see \citealt{Banzhaf.1965}) values. What follow are the expressions of such allocation principles when employing the characteristic function $c_W(\cdot)$.
\par
The \textbf{Shapley value} of the game $(c_W, \mathcal{D})$ is given by the $d$-dimensional vector $\Phi(c_W)=(\phi_1(c_W), \ldots, \phi_d(c_W))$ such that:
\begin{equation}
\phi_{j_k}(c_W)=\sum_{j_k \in S, \ S \subseteq \mathcal{D}} \dfrac{(d-|S|)!(|S|-1)!}{d!}\left[\dfrac{\sum_{i \in W}\left[SCoES_{i\vert S\setminus\left\{j_k\right\}}^{\tau_1\vert \tau_2}-SCoES_{i\vert S}^{\tau_1\vert \tau_2}\right]}{|W|} \right],
\label{ShapleycW}
\end{equation}
for all $j_k \in \mathcal{D}$.
\par
On the other hand, the \textbf{Banzhaf value} of $(c_W, \mathcal{D})$ is the $d$-dimensional vector $\beta(c_W)=(\beta_1(c_W), \ldots, \beta_d(c_W))$ such that:
\begin{equation}
\beta_{j_k}(c_W)=\dfrac{1}{2^{d-1}}\sum_{j_k \in S, \ S \subseteq \mathcal{D}} \left[\dfrac{\sum_{i \in W}\left[SCoES_{i\vert S\setminus\left\{j_k\right\}}^{\tau_1\vert \tau_2}-SCoES_{i\vert S}^{\tau_1\vert \tau_2}\right]}{|W|} \right],
\label{BanzhafcW}
\end{equation}
for all $j_k \in \mathcal{D}$.
\par
Their respective axiomatizations\footnote{There exists a large number of contributions on axiomatizations of values in literature, see for example \cite{feltkamp.1995} and \cite{van_den_brink_etal1998}.} point out a crucial difference between (\ref{ShapleycW}) and (\ref{BanzhafcW}): the Shapley value satisfies the \textit{efficiency} axiom\footnote{Nonetheless, some recent contributions have been published on the Shapley value without the efficiency axiom, see \cite{eini_haimanko.2011} for simple voting games and \cite{casajus.2014a} for different classes of games.}, i.e. $\sum_{k=1}^d \phi_{j_k}(c_W)=c_W(\mathcal{D})$, whereas the Banzahf value does not, except when $d=2$. On one hand, such axiom conveys the idea that there is an aggregate amount of risk capital to be apportioned among institutions in distress. On the other hand, perhaps it is helpful to avoid thinking of risk as a unique object to be divided, given its specific characteristics. Loosely speaking, we stress that both values can be employed based on good motivations.
\par
Clearly, we can say that $j_k$ is a \textbf{dummy institution} if and only if $\forall \ i \in W$, $\forall S \subseteq \mathcal{D}$, $SCoES_{i\vert S\setminus\left\{j_k\right\}}^{\tau_1\vert \tau_2}=SCoES_{i\vert S}^{\tau_1\vert \tau_2}$. The economic meaning of a dummy institution is simple: its marginal contribution to overall contagion is always zero.
\par
A special discussion should be devoted to the so-called \textbf{no undercut} property (see \cite{denault.2001}, Def. 3), which can be reformulated as follows: given an allocation $(K_{j_1}, \ldots, K_{j_d})$ for the game $(c_W, \mathcal{D})$, for all $S \subseteq \mathcal{D}$, the inequality
\begin{equation}
\sum_{j_k \in S} K_{j_k} \leq c_W(S),
\label{noundercut}
\end{equation}
must hold. The condition (\ref{noundercut}) has a twofold meaning. The first one is technical: any allocation $(K_{j_1}, \ldots, K_{j_d})$ satisfying it for all $S$ is in the core of the cooperative game, consequently if at least one allocation of this kind exists, the core is non-empty. The second meaning is strictly connected to a financial aspect (see \cite{denault.2001}):
an undercut happens when a portfolio allocation exceeds the risk capital that the whole group of institutions would face. \par
Relying on previous results, we can establish some sufficient conditions for positivity of $\Phi(c_W)$ and $\beta(c_W)$, i.e. to ensure that all their coordinates are non-negative, meaning that each distressed institution brings a positive marginal contribution to systemic risk.

\begin{proposition}\label{propertiesofShapleyandBanzhaf}
If for all $S \in 2^{\mathcal{D}}\setminus \emptyset$ and for all $j_k \in S$ the following hypotheses are verified:
\begin{enumerate}
\item $SCoVaR_{i\vert S\setminus \left\{j_k \right\}}^{\tau_1\vert \tau_2}>SCoVaR_{i\vert S}^{\tau_1\vert \tau_2}$;
\item $\sum_{j_k \in S\setminus \left\{j_k \right\}} X_{j_k}\leq ES_{\tau_2}\left(\sum_{j_k \in S\setminus \left\{j \right\}} X_{j_k}\right)$;
\item $\sum_{j_k \in S} X_{j_k}\leq ES_{\tau_2}\left(\sum_{j_k \in S} X_{j_k}\right)$,
\end{enumerate}
then $\phi_{j_k}(c_W) \geq 0$ and $\beta_{j_k}(c_W)$ for all $j_k \in S$.
\end{proposition}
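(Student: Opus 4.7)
The plan is to reduce positivity of each coordinate $\phi_{j_k}(c_W)$ and $\beta_{j_k}(c_W)$ to the non-negativity of every single marginal-contribution term that appears inside the sums (\ref{ShapleycW}) and (\ref{BanzhafcW}). Since the Shapley weights $(d-|S|)!(|S|-1)!/d!$ and the Banzhaf weight $1/2^{d-1}$ are themselves non-negative, once the bracketed quantities are shown to be non-negative the conclusion follows by simply assembling a sum of non-negative summands. Thus the core of the proof is a term-by-term comparison, not any new analytical argument.

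Concretely, for each $j_k\in\mathcal{D}$ and each $S\subseteq\mathcal{D}$ with $j_k\in S$, I would set $S^\prime := S\setminus\{j_k\}$ and check that the three hypotheses of the present Proposition coincide, under this identification, with the three hypotheses of Proposition \ref{propertyofSCoES}: hypothesis 1 is exactly the strict $SCoVaR$ inequality between $S^\prime$ and $S$, while hypotheses 2 and 3 guarantee that both $SCoES_{i\vert S^\prime}^{\tau_1\vert \tau_2}$ and $SCoES_{i\vert S}^{\tau_1\vert \tau_2}$ are well-defined as conditional expectations on the prescribed $ES_{\tau_2}$-events. Applying Proposition \ref{propertyofSCoES} then yields, for every $i\in W$,
$$SCoES_{i\vert S\setminus\{j_k\}}^{\tau_1\vert \tau_2} - SCoES_{i\vert S}^{\tau_1\vert \tau_2} \geq 0.$$

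Averaging over $i\in W$ preserves non-negativity of the bracketed term inside (\ref{ShapleycW}) and (\ref{BanzhafcW}). Multiplying by the non-negative Shapley weight and summing over coalitions $S\ni j_k$ delivers $\phi_{j_k}(c_W)\geq 0$, and the analogous computation with weight $1/2^{d-1}$ delivers $\beta_{j_k}(c_W)\geq 0$, giving the stated conclusion for both values at once.

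I do not anticipate any serious obstacle: the statement is essentially a bookkeeping corollary of Proposition \ref{propertyofSCoES}, obtained by specialising $S^\prime$ to $S\setminus\{j_k\}$ and invoking linearity and positivity of the allocation weights. The only mild care needed is to parse the notational clash in hypothesis 2 of the statement, where the symbol $j_k$ is reused both as summation index and as the element being removed; the intended reading is $\sum_{\ell\in S\setminus\{j_k\}} X_\ell \leq ES_{\tau_2}\bigl(\sum_{\ell\in S\setminus\{j_k\}} X_\ell\bigr)$, which is exactly hypothesis 2 of Proposition \ref{propertyofSCoES} with $S^\prime=S\setminus\{j_k\}$.
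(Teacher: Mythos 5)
Your proof is correct and takes essentially the same route as the paper's: both specialise Proposition \ref{propertyofSCoES} to the pair $S$ and $S\setminus\{j_k\}$ to obtain $SCoES_{i\vert S\setminus\{j_k\}}^{\tau_1\vert \tau_2}\geq SCoES_{i\vert S}^{\tau_1\vert \tau_2}$ for every $i\in W$, and then conclude that all summands in (\ref{ShapleycW}) and (\ref{BanzhafcW}) are non-negative. If anything, your write-up is slightly more explicit than the paper's, which leaves the non-negativity of the allocation weights, the averaging over $i\in W$, and the notational clash in hypothesis 2 implicit.
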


\begin{proof}
Given a coalition of distressed institutions $S \neq \emptyset$ and any element $j_k \in S$, we can apply Proposition \ref{propertyofSCoES} to
two coalitions $S$ and $S \setminus \left\{j_k \right\}$ by reformulating its three hypotheses. Since by Proposition \ref{propertyofSCoES} we have that
$SCoES_{i\vert S\setminus \left\{j_k \right\}}^{\tau_1\vert \tau_2} \geq SCoES_{i\vert S}^{\tau_1\vert \tau_2}$,
then all terms in the sums in (\ref{ShapleycW}) and (\ref{BanzhafcW}) are positive.
\end{proof}
%
\section{Application}
\label{sec:application}
%
\noindent To illustrate how the $SCoES$ risk measure behaves in
practice we examine the evolution of European Sovereign Credit
Spreads (CDS) over a period that includes the Eurozone sovereign
debt crisis of 2012. Specifically, we investigate the evolution over
time of the Shapley--Value $SCoES$ induced by the cooperative Game
where the Germany acts as the only \qmo safe\qmcsp country, as
described in the previous sections. The potentially distressed
countries are: Belgium, France, Greece, Italy, Netherlands, Portugal
and Spain. We consider a panel of daily CDS spreads over the period
from July 21st, 2008 to December 30th, 2014 except for the Greece
for which the data are available only until March 8, 2012. We use
US--denominated sovereign CDS for each country using data
obtained from Datastream. Our aim is to assess how the events
related to the European sovereign debt crisis impact the safety of
the most important economy in the Euro area, using the provided risk
measure and the associated risk measurement framework based on the
cooperative game. A similar empirical investigation has been conducted by
\cite{bernardi_catania.2015} using stock market data of the major European financial indexes,
\cite{lucas_etal.2014} using dynamic Generalised Autoregressive Score (GAS) models on CDS, and \cite{engle_etal.2014} again using stock market data of European individual institutions, and \cite{blasques_etal.2014} using spatial GAS models on European sovereign debt CDS.\newline
\indent Major financial events affecting the Euro area
during the considered period are collected in Table
\ref{tab:fin_crisis_timeline}. Since EU countries have been affected by the crisis to different
degree, sovereign credit spreads in Europe are strongly correlated.
Figure \ref{fig:cds_log} shows the evolution of the credit default
spreads in log basis points for the period covered by our analysis.
Visual inspection of the series reveals clear common patterns
particularly between Netherlands and Germany on the one hand and
Italy and Spain on the other hand. As expected, the evolution of the
Greek CDS strongly differs from those of the other countries in the
sample. Summary statistics of log CDS returns multiplied by $100$
are reported in Table \ref{tab:cds_summary_stat}.\newline
\indent In order to calculate the $SCoES$ risk measure a parametric
assumption about the joint distribution of the involved CDS
log--returns should be made. Although not always supported by the
empirical evidence, we assume that the CDS returns are jointly
Gaussian. The Gaussian assumption is not only convenient but it represents also a
common choice for practical applications, and it favours the interpretation of the estimation
results as the output of a graphical model, see \cite{koller_friedman.2009}. Nevertheless the Gaussian distribution can be easily replaced by either another parametric distribution or by more involved dynamic models that describe the evolution over time of the CDS, see, for example, \cite{bernardi_catania.2015}. Proposition \ref{th:gauss_scovar_scoes} in Appendix
\ref{app:appendix_scoes_calc} provides the analytical formulas to
calculate $VaR$, $ES$, $SCoVaR$ and $SCoES$ under the Gaussian
assumption. As far as parameter estimation is concerned we apply the
Graphical--Lasso algorithm of \cite{friedman_etal.2008}, which allows for sparse covariance estimation. The tuning parameter that regulates the amount of sparsity in the covariance structure has been fixed at $\lambda_N=2\sqrt{\frac{\log p}{N}}$, where $N$ denotes the sample size, as suggested by the theory \cite{ravikumar_etal.2011}, see also \cite{hastie_etal.2015} and references therein.\newline
\indent To analyse more deeply the impact of the recent European
Sovereign debt crisis, we estimate recursively the SCoES over the
sample period using a rolling window. 
%
%
%
%
 %
%
\begin{table}[!t]
\captionsetup{font={small}, labelfont=sc}
\begin{center}
\begin{small}
\resizebox{0.9\columnwidth}{!}{%
\smallskip
\begin{tabular}{lcccccccccc}\\
\toprule
Name & Min & Max & Mean & Std. Dev. & Skewness & Kurtosis & 1\% Str. Lev. & JB \\
\hline
Belgium &  -21.912  & 13.854  & -0.049  & 2.726  & -0.422  & 12.144  & -7.999  & 5477.445  \\
France & -23.002  & 18.643  & 0.023  & 3.155  & -0.213  & 10.068  & -9.525  & 3257.265  \\
Germany & -33.747  & 30.839  & -0.017  & 3.367  & -0.368  & 21.499  & -9.535  & 22264.754  \\
Greece & -48.983  & 23.611  & 0.401  & 5.183  & -1.029  & 17.169  & -13.700  & 6081.685  \\
Italy & -42.675  & 34.358  & 0.011  & 4.237  & -0.579  & 17.933  & -12.229  & 14571.710  \\
Netherlands& -25.672  & 18.572  & -0.047  & 3.028  & -0.466  & 13.130  & -9.669  & 6721.984  \\
Portugal & -61.177  & 26.909  & 0.064  & 4.320  & -1.616  & 33.500  & -10.911  & 61104.578  \\
Spain & -35.180  & 27.174  & 0.020  & 4.137  & -0.078  & 11.616  & -11.781  & 4824.268  \\
\bottomrule
\end{tabular}}
\caption{Summary statistics of the panel of country specific CDS
spreads for the period beginning on July 21, 2008 and ending on
December 20, 2014. For the Greece the period goes from July 21, 2008
 to March 8, 2012. The seventh column, denoted by ``1\% Str.
Lev.'' is the 1\% empirical quantile of the returns distribution,
while the eight column, denoted by ``JB'' is the value of the
Jarque-Ber\'a test-statistics.}
\label{tab:cds_summary_stat}
\end{small}
\end{center}
\end{table}
%
\begin{table}[!h]
\centering
\resizebox{0.9\columnwidth}{!}{%
\begin{tabular}{ll}
\toprule
Date  & Event  \\
\hline
Mar. 9, 2009 & the peak of the onset of the recent GFC.\\
Oct. 18, 2009 & Greece announces doubling of budget deficit.\\
Mar. 3, 2010 & EU offers financial help to Greece.\\
Apr. 23, 2010 & Greek Prime Minister calls for Eurozone--IMF rescue package.\\
Apr. 23, 2010 & Greece achievement of 18bn USD bailout 
from EFSF, IMF and bilateral loans.\\
Nov. 29, 2010 & Ireland achievement of 113bn USD bailout 
from EU, IMF and EFSF.\\
May 05, 2011 & the ECB bails out Portugal.\\
July 21, 2011 & Greece is bailed out.\\
Dec. 22, 2011 & ECB launches the first Long-Term 
Refinancing Operation (LTRO).\\
Feb. 12, 2012 & Greece passes its most severe austerity package yet.\\
Mar. 1, 2012 & ECB launches the second LTRO.\\
\multirow{2}{*}{July 26, 2012} & unexpectedly, the ECB president 
Mario Draghi, announces that \\
&\qmo The ECB is ready to do whatever it takes to preserve the euro\qmc.\\
Oct. 8, 2012 & European Stability Mechanism (ESM) is inaugurated.\\
\multirow{2}{*}{April 07, 2013} & the conference of the Portuguese 
Prime Minister regarding \\
& the high court's block of austerity plans.\\
Aug. 23, 2013 & the Eurozone crisis leads to more bankruptcies in Italy.\\
Sep. 12, 2013 & European Parliament approves new unified bank supervision system.\\
\bottomrule
\end{tabular}}
\caption{Financial crisis timeline.}
\label{tab:fin_crisis_timeline}
\end{table}
%

\begin{figure}[!ht]
\begin{center}
\captionsetup{font={small}, labelfont=sc}
\includegraphics[width=0.7\linewidth]{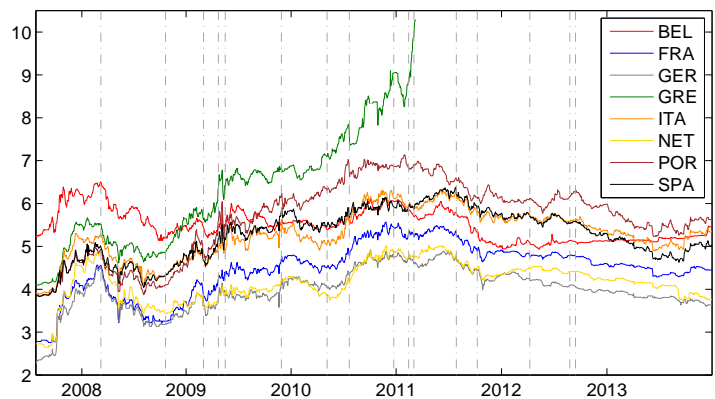}
\caption{Daily evolution of Credit default swap spreads of eight
European sovereigns, from July 21, 2008 to December 20, 2014 in log
basis points. Vertical dashed lines represent major financial
downturns: for a detailed description see Table
\ref{tab:fin_crisis_timeline}.}
\label{fig:cds_log}
\end{center}
\end{figure}
%
\begin{figure}[!t]
\begin{center}
\captionsetup{font={small}, labelfont=sc}
\subfloat[Prior to the Greek crisis]{\label{fig:sv_scoes_with_greece}\includegraphics[width=0.45\textwidth]{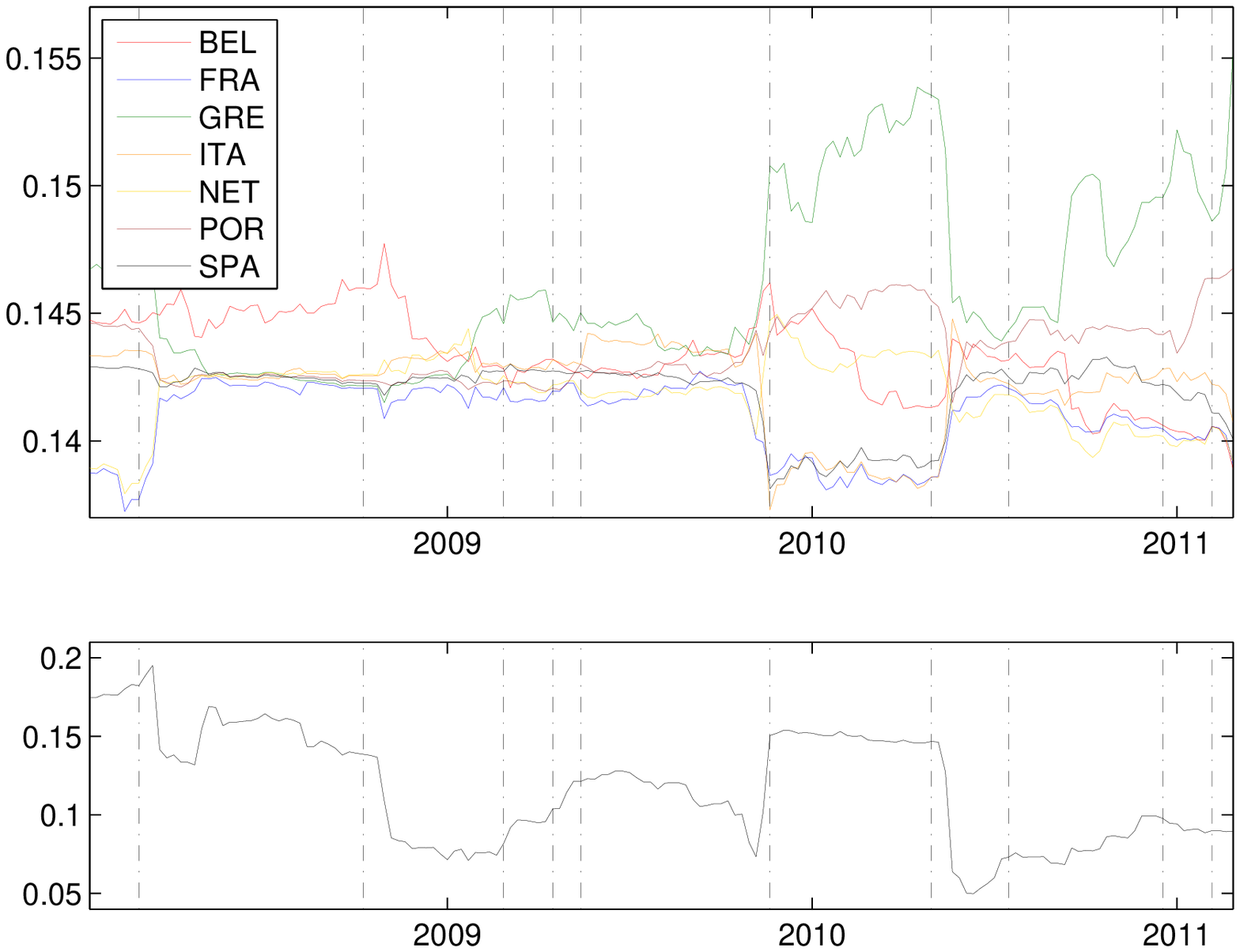}}
\qquad
\subfloat[After the Greek crisis]{\label{fig:sv_scoes_without_greece}\includegraphics[width=0.45\textwidth]{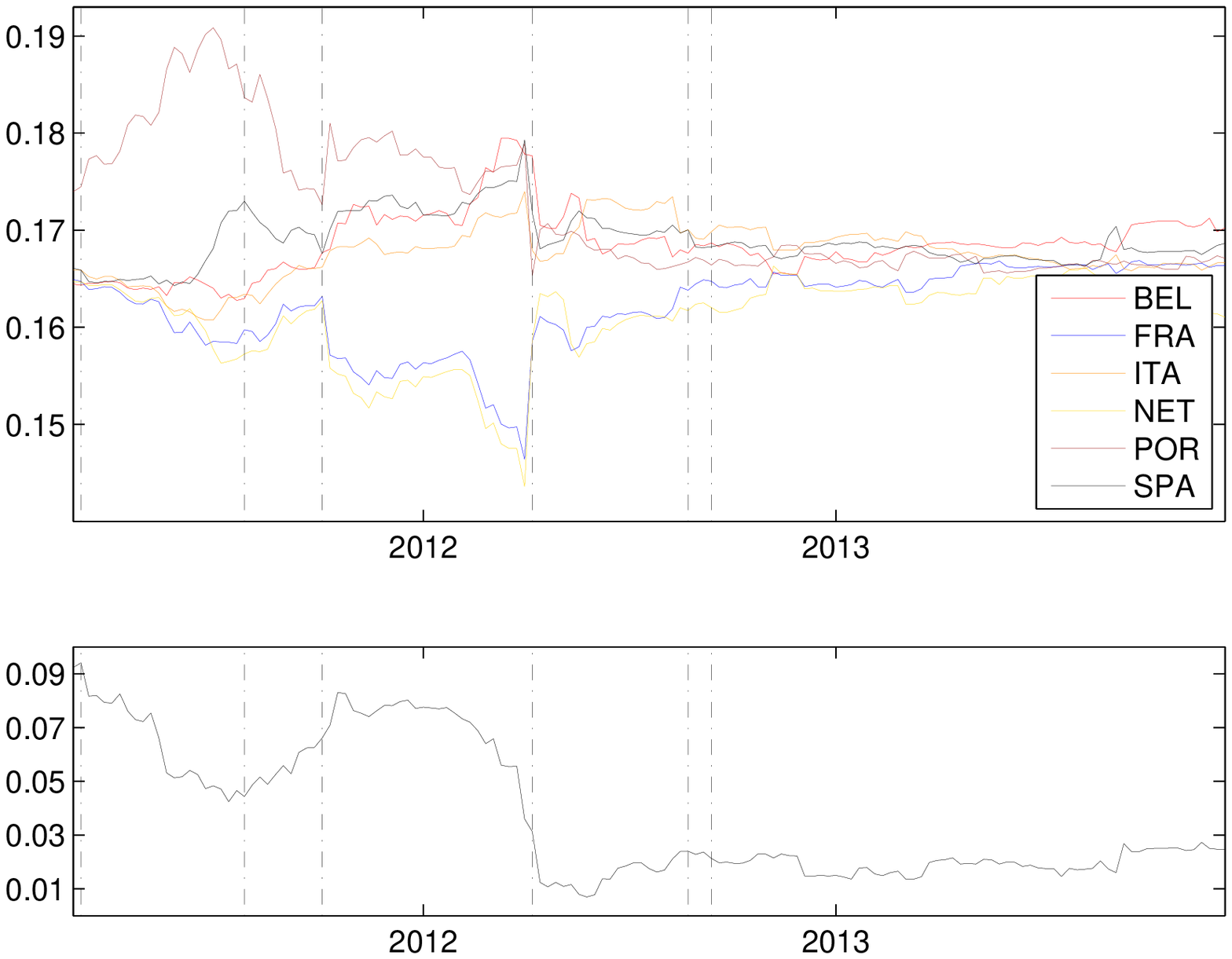}}
\caption{\textit{(Top panel):} Shapley Value using the $SCoES$ at
the $5\%$ level of Germany with respect to all the remaining
countries. \textit{(Bottom panel):} Overall risk of Germany measured
by the $SCoES$ when all the remaining countries are in distress.
Vertical dashed lines represent major financial downturns: for a detailed description see Table \ref{tab:fin_crisis_timeline}.}
\label{fig:SV_SCoES_results}
\end{center}
\end{figure}
%
\begin{figure}[!t]
\begin{center}
\captionsetup{font={small}, labelfont=sc}
\subfloat[Prior to the Greek crisis]{\label{fig:bz_scoes_with_greece}\includegraphics[width=0.45\textwidth]{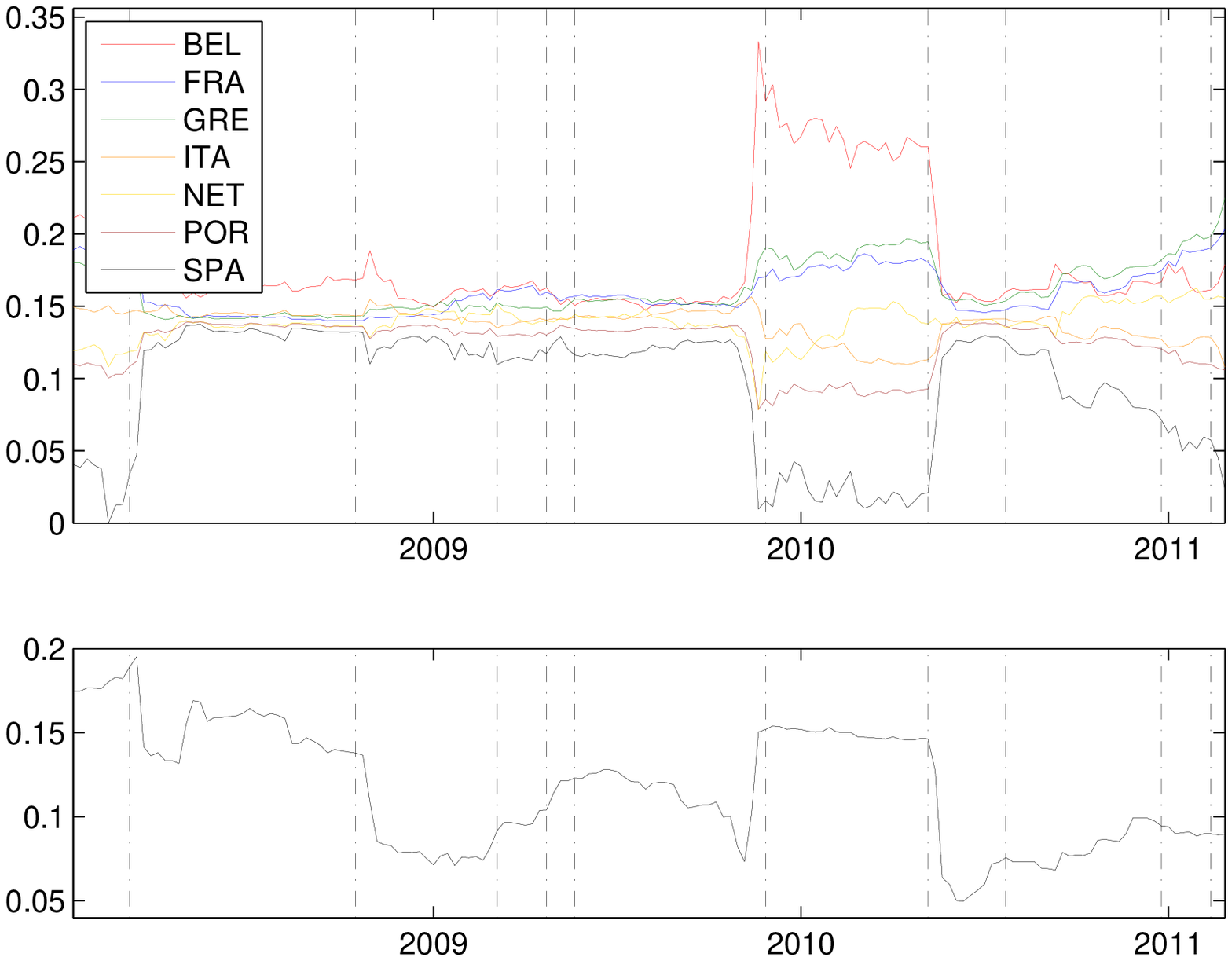}}
\qquad
\subfloat[After the Greek crisis]{\label{fig:bz_scoes_without_greece}\includegraphics[width=0.45\textwidth]{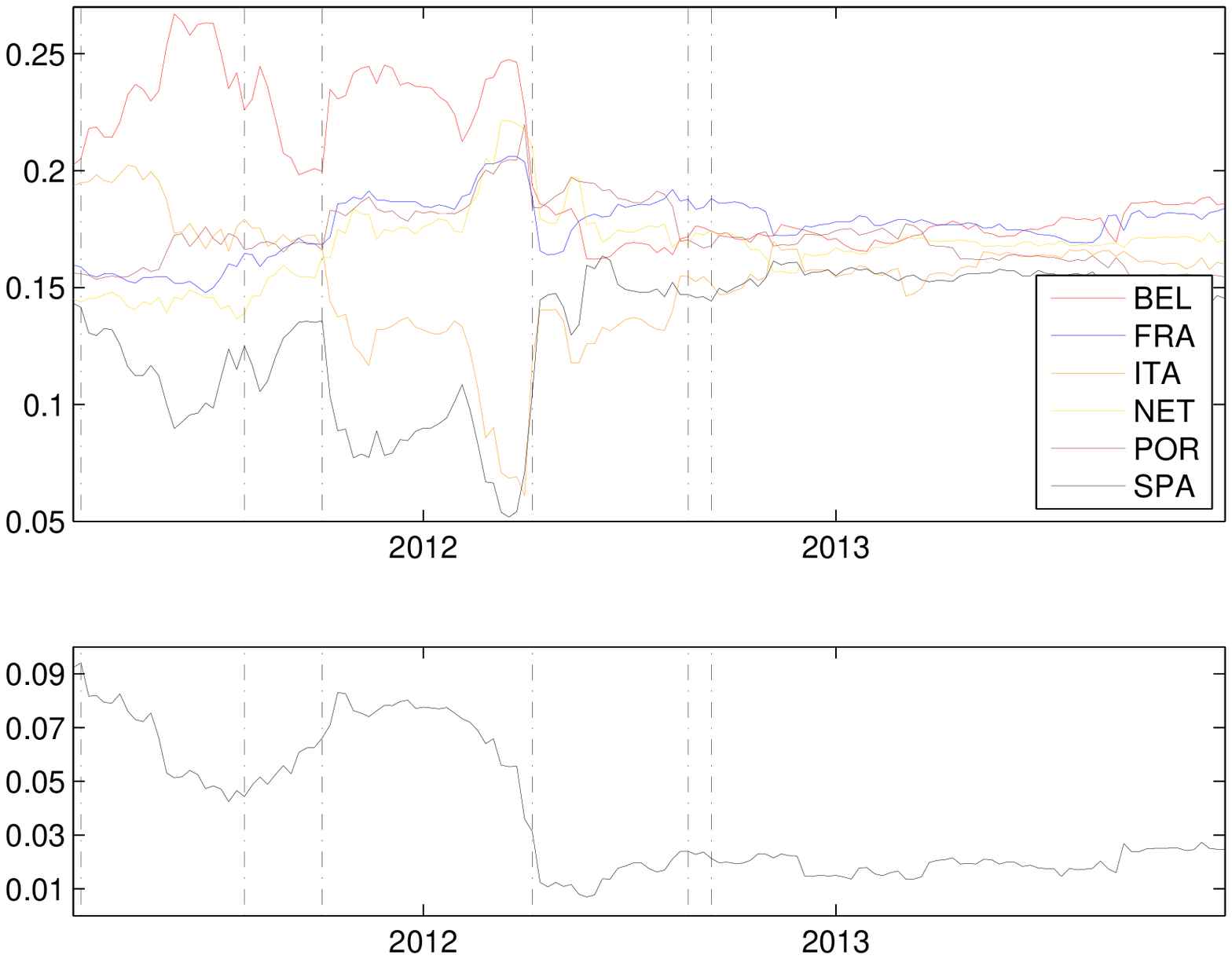}}
\caption{\textit{(Top panel):} Banzhaf Value using the $SCoES$ at
the $5\%$ level of Germany with respect to all the remaining
countries. \textit{(Bottom panel):} Overall risk of Germany measured
by the $SCoES$ when all the remaining countries are in distress.
Vertical dashed lines represent major financial downturns: for a detailed description see Table \ref{tab:fin_crisis_timeline}.}
\label{fig:BZ_SCoES_results}
\end{center}
\end{figure}
%
Moreover, to obtain results more robust to temporary
short--term shocks affecting the considered economies, we consider
weekly log--CDS returns. Specifically, at each point in time we
estimate the $SCoES$ risk measure using a window of $N=26$ more recent
weekly observations, and then we run the cooperative game to get the
Shapley Value with $\tau_1=\tau_2=5\%$. It is worth mentioning that, with $p=8$ institutions, we are going to estimate $d=44\gg 26=N$ parameters. The Graphical--Lasso method of \cite{friedman_etal.2008} delivers consistent estimates of the parameters even when the number of parameters is greater than the dimension of the sample, see \cite{hastie_etal.2015} and \cite{tibshirani.1996,tibshirani.2011} for further details.\newline
\indent Results are reported in Figure
\ref{fig:SV_SCoES_results}, for the period before the onset of the
Greek crisis \eqref{fig:sv_scoes_with_greece}, as well as for the
subsequent period \eqref{fig:sv_scoes_without_greece}. For both
periods, the bottom panel reports the overall impact of the distress
of the remaining countries over the German economy as measured by
the $SCoES$. Throughout the sample period, the overall risk of
Germany due to the potential distress of one or more of the
remaining European countries is high, until mid 2011. At that
time, the level suddenly decreases to the lower level of about 0.09,
as a consequence of the bailouts of Portugal and Greece, in May and July 2011, respectively. The overall risk still remain at the level of 0.09 till the April 2013 a few months later the announcement of the implementation of the of the Outright Monetary
Transactions (OMT) and the European Stability Mechanism (ESM) in October 2012. If is worth noting that the launch of the first Long--Term Refinancing Operation (LTRO) by ECB in December 2011 and the second LTRO in March 2012 only had a moderate impact on the overall risk that decreased till mid 2012 and the unexpected strongest defence of the Euro of the ECB President Mario Draghi (July 26, 2012), did not contribute to reduce the risk of the Germany.\newline
%
%
%
\indent Concerning the evolution of the Shapley values reported in the top
panel of Figure \ref{fig:SV_SCoES_results} and the Banzhaf Value in Figure \ref{fig:BZ_SCoES_results}, they can be interpreted
as the normalised country risk factors. As expected, the two approaches provide different contributions to risk, in particular during the onset of the European crisis, reflecting their different properties. More precisely, the Shapley solution, suggests that, during the
period in between the bailout of Ireland (November 29, 2010) and the
bailout of Portugal (May 5, 2011), the most severe source of risk 
for the Germany's economy is represented by the Greece, while Spain,
Italy and France contribute less, see Figure \ref{fig:SV_SCoES_results}. The picture by Banzhaf is a little bit different since, during the same period, the most important source of risk for Germany is Belgium, followed by Greece, a result which is a little bit surprisingly. Afterwards, the overall
contributions of European countries converge, for both methods. The failure of the
Greek austerity package in February 2012 suddenly increases the
riskiness of Greece which is comparable only with that of Portugal
at the beginning of 2011. Interestingly, the proposed approach is able to capture the most important events that happened during European sovereign debt crisis of 2012, as reported in Table \ref{tab:fin_crisis_timeline}.

%
\section{Conclusions and further developments}
\label{sec:conclu}
%
This paper presents a cooperative game among distressed institutions
to assess the potential damage done by all possible coalitions in
distress. At this aim, a new risk measure which features some
properties of the standard Expected Shortfall in a financial
framework where some institutions are distressed and contagion
threatens the remaining safe institutions is developed. Standard
solution concepts like Shapley value and Banzhaf value can be
helpful to measure the marginal contributions to systemic risk.\newline
\indent Our study of the European sovereign debt crisis of 2012 provides empirical support for
the ability of the proposed cooperative Game approach to systemic
risk measurement to effectively capture the dynamic evolution of the
overall riskiness of the European countries. Furthermore, the proposed risk measurement
framework is able to identify the major sources of risk and the risk
contributions.\newline
\indent Further extensions of such a theoretical setup can be
conceived, in terms of more complex and precise cost functions to be
employed in the cooperative game. Moreover, a detailed analysis of
correlations among distressed institutions might give rise to
different game structures, such as a priori unions or bounded forms
of cooperation which can be described with the help of graphs. In such cases, a
model with some constraints might be necessary to determine the
characteristics of risk transmission and the related consequences on
the systemic risk. It is possible that a given structure with
certain links among institutions can minimize the contagion risk.
%
\section*{Acknowledgements}
This research is supported by the Italian Ministry of Research PRIN
2013--2015, ``Multivariate Statistical Methods for Risk Assessment''
(MISURA), and by the ``Carlo Giannini Research Fellowship'', the
``Centro Interuniversitario di Econometria'' (CIdE) and ``UniCredit
Foundation''. We would like to thank Rosella Castellano, Umberto
Cherubini, Rita D'Ecclesia, Fabrizio Durante, Piotr Jaworski, Viviana Fanelli,
Gianfranco Gambarelli, Sabrina Mulinacci, Roland Seydel, Marco
Teodori, the audience at CFE-ERCIM 2014 in Pisa (Italy), and then
audience at EURO 2016 in Poznan (Poland) for their valuable comments
and suggestions. The usual disclaimer applies.
%
\appendix
\section*{Appendix}
\label{app:appendix_scoes_calc}
%
\noindent In this Appendix we provide analytical formulas for the computation of the $VaR$, $ES$, $SCoVaR$ and $SCoES$, as formally defined in the previous sections, under the assumption of the joint Gaussian distribution of the involved variables.

\begin{proposition}
\label{th:gauss_scovar_scoes}
Let $X \sim \mathcal{N}_p \left(\mu,\Sigma\right)$ where $\mu\in\mathbb{R}^p$ is a vector of location parameters and $\Sigma$ is a $\left(p\times p\right)$-- symmetric variance-covariance matrix. Consider the transformation $Z=\left[X_i,\sum_{k=1,k\neq i}^p X_{k}\right]^\prime$, for $i=1,2,\dots,p$, then $Z\sim\mathcal{N}_2\left(\mu_Z,\Sigma_Z\right)$, with
\begin{align}
\mu_Z&=\left[\mu_i,\ \sum_{k=1,k\neq i}^p\mu_k\right]^\prime
\label{muZ}\\
\Sigma_Z&=\left(\begin{matrix}\sigma^2 (X_i) & \sum_{k=1,k\neq i}^p \sigma_{i,k} \\ & & \\ \sum_{k=1,k\neq i}^p \sigma_{i,k} &
\sigma^2 \left( \sum_{k=1,k\neq i}^p X_{k} \right)
\end{matrix}\right),
\label{SigmaZ}
\end{align}
where $\sum_{k=1,k\neq i}^p \sigma_{i,k}$ denotes the covariance between $X_i$ and $\sum_{k=1,k\neq i}^p X_{k}$, and $\sigma^2(\cdot)$ denotes the variance of $X_i$. Under the previous assumptions the VaR, ES, of $X_i$, for $i=1,2,\dots,p$ are calculated as follows:
\begin{align}
\widehat{\nu}_{i}^{\tau_2}&\equiv VaR_{\tau_2}\left(\sum_{k=1,k\neq i}^p X_{k} \right)\nonumber\\
&=\sum_{k=1,k\neq i}^p\mu_k+\sqrt{\sigma^2\left(\sum_{k=1,k\neq i}^pX_k\right)}\Phi^{-1}\left(\tau_2\right)\\
\widehat{\psi}_{i}^{\tau_2}&\equiv ES_{\tau_2}\left(\sum_{k=1,k\neq i}^p\mu_k\right)\nonumber\\
&=\sum_{k=1,k\neq i}^p\mu_k-\sqrt{\sigma^2\left(\sum_{k=1,k\neq i}^p X_k\right)}\dfrac{\phi\left(\widehat{\nu}_{i}^{\tau_2}\right)}{\Phi\left(\widehat{\nu}_{i}^{\tau_2}\right)},
\end{align}
see, \cite{nadarajah_etal.2014} and \cite{bernardi.2013}, while the SCoVaR and SCoES becomes
\begin{align}
\hat{\gamma}_{i\vert \sum_{k\neq i}X_k}^{\tau_1\vert\tau_2}&\equiv SCoVaR_{i\vert \sum_{k\neq i}X_k}^{\tau_1\vert \tau_2}\nonumber\\
&:=y\in\mathbb{R}\ni\frac{F_{X_i,\sum_{k\neq i} X_{k}}\left(y,\sum_{k=1,k\neq i}^p X_{k}\leq
\widehat{\nu}_{i}^{\tau_2}\right)}{\tau_2}=\tau_1\label{eq:scovar_eq}\\
\hat{\varsigma}_{i\vert\sum_{k=1,k\neq i}X_k}^{\tau_1\vert \tau_2}&\equiv SCoES_{i\vert\sum_{k\neq i}}^{\tau_1\vert \tau_2}\nonumber\\
&=\sum_{k\neq i}^p\mu_k-\sqrt{\sigma^2\left(X_i\right)}\left[\phi\left(\hat{\gamma}_{i\vert\sum_{k\neq i}X_k}^{\tau_1\vert\tau_2}\right)\Phi\left(\frac{\widehat{\nu}_{i}^{\tau_2}-\rho\hat{\gamma}_{i\vert\sum_{k\neq i}X_k}^{\tau_1\vert\tau_2}}{\sqrt{1-\rho^2}}\right)\right.\nonumber\\
&\qquad\qquad\qquad\qquad\qquad\qquad\left.+\rho\phi\left(\widehat{\nu}_{i}^{\tau_2}\right)
\Phi\left(\frac{\hat{\gamma}_{i\vert\sum_{k\neq i}X_k}^{\tau_1\vert\tau_2}-\rho\widehat{\nu}_{i}^{\tau_2}}{\sqrt{1-\rho^2}}\right)\right],
\end{align}
for $i=1,2,\dots,p$, where $F_{X,Y}\left(\cdot\right)$ denotes the joint cdf of the random variables $\left(X,Y\right)$. 
\end{proposition}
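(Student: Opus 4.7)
The plan is to cascade from bivariate Gaussianity of $Z$, through the univariate marginal $VaR$ and $ES$, to the conditional $SCoVaR$, and finally to the doubly--truncated first moment defining $SCoES$. First, since $X\sim\mathcal{N}_p(\mu,\Sigma)$ any linear map $Z=AX$ is again normal; writing $A\in\mathbb{R}^{2\times p}$ with first row $e_i^\top$ and second row $(\mathbf{1}-e_i)^\top$ gives $Z\sim\mathcal{N}_2(A\mu,A\Sigma A^\top)$. A direct expansion of the two entries of $A\mu$ and of the three distinct entries of the symmetric matrix $A\Sigma A^\top$, using bilinearity of covariance and $\mathrm{Cov}(X_i,\sum_{k\neq i}X_k)=\sum_{k\neq i}\sigma_{i,k}$, produces exactly (\ref{muZ}) and (\ref{SigmaZ}).

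Second, the marginal of $\sum_{k\neq i}X_k$ is univariate Gaussian with the parameters just derived, so the expressions for $\widehat{\nu}_i^{\tau_2}$ and $\widehat{\psi}_i^{\tau_2}$ follow from the standard quantile formula $VaR_{\tau}(Y)=\mu_Y+\sigma_Y\Phi^{-1}(\tau)$ and the closed--form Gaussian $ES$ obtained from the truncated first moment of the standard normal density (see \citealt{nadarajah_etal.2014} and \citealt{bernardi.2013}). The third step, the $SCoVaR$, exploits $\mathbb{P}\{\sum_{k\neq i}X_k\leq\widehat{\nu}_i^{\tau_2}\}=\tau_2$ by construction of the marginal $VaR$, so that Definition \ref{SCoVaR} reduces to the equation $F_{X_i,\sum_{k\neq i}X_k}(y,\widehat{\nu}_i^{\tau_2})/\tau_2=\tau_1$; continuity and strict monotonicity of the bivariate Gaussian cdf in its first argument then yield a unique solution $\hat{\gamma}_{i\vert\sum_{k\neq i}X_k}^{\tau_1\vert\tau_2}$.

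The main obstacle is the final step. The $SCoES$ is the doubly--truncated expectation
\[
\mathbb{E}\bigl[X_i\,\big|\,X_i\leq\hat{\gamma}_{i\vert\sum_{k\neq i}X_k}^{\tau_1\vert\tau_2},\ \textstyle\sum_{k\neq i}X_k\leq\widehat{\nu}_i^{\tau_2}\bigr].
\]
I would first standardize both components using $\mu_Z$ and $\Sigma_Z$ to reduce the computation to the expectation of the first coordinate of a standard bivariate normal with correlation $\rho=\mathrm{Corr}\bigl(X_i,\sum_{k\neq i}X_k\bigr)$, truncated to a lower orthant. Then, decomposing the density via the conditional Gaussian representation $X_i\mid\sum_{k\neq i}X_k$ and integrating by parts (equivalently, differentiating the bivariate cdf with respect to each truncation point), I would obtain two symmetric contributions, each of the form \emph{univariate density at one truncation point} times \emph{univariate cdf of the conditional distribution at the other truncation point}, with the argument of the latter rescaled by $\sqrt{1-\rho^2}$. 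Reverting the standardization assembles these two terms into the stated formula. I expect the main difficulty to be bookkeeping: ensuring that $\rho$ is consistently extracted from (\ref{SigmaZ}), that the standardizing factors $\sqrt{\sigma^2(X_i)}$ and $\sqrt{\sigma^2(\sum_{k\neq i}X_k)}$ appear in the correct places, and that the conditional standard deviation $\sqrt{1-\rho^2}$ enters the $\Phi$ arguments correctly. This bookkeeping is shortened considerably by invoking the closed--form truncated bivariate Gaussian expectations in \citealt{bernardi.2013} and \citealt{nadarajah_etal.2014}.
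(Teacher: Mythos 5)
Your plan is sound, and it reconstructs essentially the only argument available: the paper prints \emph{no} proof of Proposition~\ref{th:gauss_scovar_scoes}. It relies on exactly the ingredients you list --- closure of the Gaussian family under the linear map $Z=AX$ for \eqref{muZ}--\eqref{SigmaZ}, the citations to \cite{nadarajah_etal.2014} and \cite{bernardi.2013} for the closed-form truncated Gaussian moments, and, for \eqref{eq:scovar_eq}, the same existence-and-uniqueness observation you make (the paper's remark after the proposition invokes absolute continuity of the involved variables, matching your continuity-and-monotonicity argument). Your steps one through three therefore make explicit what the paper leaves implicit, and your step four (reduction to a lower-orthant truncation of a standard bivariate normal, then differentiation of the bivariate cdf in each truncation point) is the standard derivation behind the cited closed forms.

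One caveat, so the bookkeeping in step four does not derail you: carried out faithfully, your computation yields, writing $\hat\gamma$ and $\hat\nu$ for the $SCoVaR$ and the $VaR$ of the sum, $a=\left(\hat\gamma-\mu_i\right)/\sqrt{\sigma^2\left(X_i\right)}$, $b=\left(\hat\nu-\sum_{k\neq i}\mu_k\right)/\sqrt{\sigma^2\left(\sum_{k\neq i}X_k\right)}$, and $\Phi_2\left(\cdot,\cdot;\rho\right)$ for the standard bivariate normal cdf,
\begin{equation*}
\mathbb{E}\left[X_i \mid X_i\le\hat\gamma,\ \sum_{k\neq i}X_k\le\hat\nu\right]
=\mu_i-\frac{\sqrt{\sigma^2\left(X_i\right)}}{\Phi_2\left(a,b;\rho\right)}\left[\phi\left(a\right)\Phi\left(\frac{b-\rho a}{\sqrt{1-\rho^2}}\right)+\rho\,\phi\left(b\right)\Phi\left(\frac{a-\rho b}{\sqrt{1-\rho^2}}\right)\right],
\end{equation*}
with $\Phi_2\left(a,b;\rho\right)=\tau_1\tau_2$ by construction of $\hat\gamma$ via \eqref{eq:scovar_eq}. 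This does not literally match the printed display: there the leading location term reads $\sum_{k\neq i}\mu_k$ where the expectation of $X_i$ requires $\mu_i$, the arguments of $\phi$ and $\Phi$ appear unstandardized, and the normalization by the orthant probability is absent; analogously, in the univariate $ES$ display the ratio should be $\phi\left(\Phi^{-1}\left(\tau_2\right)\right)/\tau_2$ rather than $\phi\left(\hat\nu\right)/\Phi\left(\hat\nu\right)$ with the unstandardized quantile. Read these as notational slips in the statement, not as defects of your argument: your route produces the corrected formulas, which is what the cited references actually contain.
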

\noindent Equation \eqref{eq:scovar_eq} implicitly defines the $SCoVaR$ as the value of $y$ that solves the conditional cdf of the involved variables equal to $\tau_1$. The solution always exists and is unique because the involved random variables are absolutely continuous.
%
\newpage
\clearpage

\end{document}